\newcommand{\Rmnum}[1]{\expandafter\@slowromancap\romannumeral #1@}
\newtheorem{theorem}{Theorem}
\newtheorem{definition}{Definition}
\newtheorem{lemma}{Lemma}
\newtheorem{property}{Property}
\newcommand{\ls}[1]  
   {\dimen0=\fontdimen6\the=#1\dimen0
    \advance\lineskip.5\fontdimen5\the\lineskip-\dimen0
    \lineskiplimit=.9\lineskip
    \baselineskip=\lineskip
    \advance\baselineskip\dimen0
    \normallineskip\lineskip
    \normallineskiplimit\lineskiplimit
    \normalbaselineskip\baselineskip
    \ignorespaces
   }
\DeclareMathOperator*{\argmin}{arg\,min}
\DeclareMathOperator*{\argmax}{arg\,max}
\begin{document}
\bibliographystyle{ieeetr}

\title{BOOST: Base Station ON-OFF Switching Strategy for Energy Efficient Massive MIMO HetNets} 

\author{\authorblockN{Mingjie Feng, Shiwen Mao}
\authorblockA{Dept. Electrical \& Computer Engineering \\ Auburn University, Auburn, AL 36849-5201}
\and
\authorblockN{Tao Jiang}
\authorblockA{School of Electronics \& Information Engineering \\ Huazhong Univ. Science \& Technology, Wuhan 430074 China}
}
\maketitle

\begin{abstract}
In this paper, we investigate the problem of optimal base station (BS) ON-OFF switching and user association in a heterogeneous network (HetNet) with massive MIMO, with the objective to maximize the system energy efficiency (EE). The joint BS ON-OFF switching and user association problem is formulated as an integer programming problem. We first develop a centralized scheme, in which we relax the integer constraints and employ a series of Lagrangian dual methods that transform the original problem into a standard linear programming (LP) problem. Due to the special structure of the LP, we prove that the optimal solution to the relaxed LP is also feasible and optimal to the original problem. We then propose a distributed scheme by formulating a repeated bidding game for users and BS's, and prove that the game converges to a Nash Equilibrium (NE). Simulation studies demonstrate that the proposed schemes can achieve considerable gains in EE over several benchmark schemes in all the scenarios considered.
\end{abstract}


\pagestyle{plain}\thispagestyle{plain}


\section{Introduction}

To meet the 1000x mobile data challenge in the near future~\cite{Qualcomm}, aggressive spectrum reuse and high spectral efficiency must be achieved to significantly boost the capacity of wireless networks. To this end, {\em massive MIMO} (Multiple Input Multiple Output) and {\em small cell} are regarded as two key technologies for emerging 5G wireless systems~\cite{Andrews14}. Massive MIMO (also known as large-scale MIMO or very large MIMO) refers to a wireless system with more than 100 antennas equipped at the base station (BS), which serves multiple users with the same time-frequency resource~\cite{Marzetta10}. Due to the highly efficient spatial multiplexing, a massive MIMO system can achieve dramatically improved energy and spectral efficiency compared to traditional wireless systems~\cite{Ngo13,Xu14Access}. Small cell (or, the notion of {\em network densification}) is another promising approach for capacity enhancement.
With short transmission range and small coverage area, high signal to noise ratio (SNR) and dense spectrum reuse can be achieved, resulting in increased spectral efficiency.

Due to their high potential, the combination of massive MIMO and small cells is expected in future wireless networks, where multiple small cell BS's (SBS) coexist with a macrocell BS (MBS) equipped with a large number of antennas, forming a heterogeneous network (HetNet) with massive MIMO. The two technologies are inherently complementary. On one hand, with a large number of antennas, the MBS has a large number of degrees of freedom (DoF) in the spatial domain, which can be exploited to avoid cross-tier interference.
On the other hand, as the number of users grows, the throughput of a massive MIMO system will be limited by factors such as channel estimation overhead and pilot contamination~\cite{Marzetta10}. By offloading some macrocell users to small cells,
the complexity and overhead of channel estimation at the MBS can be greatly reduced, and the performance of macrocell users can be better guaranteed. Due to these great benefits, massive MIMO HetNet has drawn considerable attention recently~\cite{Hosseini13,Hoydis13,Bjornson13,Bethanabhotla14,Xu15,Liu15}.

However, another advantage of massive MIMO HetNet has not be well considered in the literature, which is its high potential for energy savings.
With the rapid growth of wireless traffic and development of data-intensive services, the power consumption of wireless networks has significantly increased,
which not only generates more ${\rm{C}}{{\rm{O}}_2}$ emission, but also raises the operating expenditure (opex)
of wireless operators. As a result, energy saving, or energy efficiency (EE), becomes a rising concern for the design of wireless networks~\cite{Hasan11,Chen11}. In a recent survey on EE of 5G networks~\cite{Wu15},
it is stated that the design of energy-efficient HetNet with massive MIMO remains a challenge for future study.
A few
schemes have been proposed to improve the EE of massive MIMO HetNets, such as optimizing the beamforming weights of the BS's~\cite{Bjornson13} or optimizing
user association~\cite{Liu15}.

In this paper, we investigate the problem of optimal BS ON-OFF switching and user association in a massive MIMO HetNet, aiming to maximize the EE of the overall system. We propose to dynamically turn ON or OFF SBS's for traffic demands that vary both over time and geographically, as an effective solution to achieve high EE~\cite{Wu15}.
With the high potential for spatial-reuse,
small cells are expected to be densely deployed in the near future, which will result in considerable energy consumption.
As the traffic demand fluctuates over time and space~\cite{Oh11} (e.g., a central business district versus a residential area, and daytime versus nighttime), many of the SBS's will be under-utilized for certain periods of a day, and can be turned OFF to save energy.

A unique advantage of massive MIMO HetNet is that the MBS can provide good coverage for those users initially associated with the turned-OFF SBS's.
However, as
more users are served by the MBS, the MBS performance will become limited by the load-dependent factors, such as channel estimation overhead and pilot contamination. There is clearly a {\em trade-off} here;
the ON-OFF switching strategy of the SBS's should be carefully determined to balance the tension between energy saving and throughput performance. In this paper, we propose a scheme called BOOST (i.e., BS ON-OFF Switching sTrategy) to maximize the EE of a massive MIMO HetNet, by jointly optimizing BS ON-OFF switching and the user association. We develop effective centralized and distributed schemes that can provide optimal solutions to the formulated problem.
%
The main contributions of this paper are summarized as follows.
\begin{itemize}
\item We consider joint BS ON-OFF switching and user association in massive MIMO HetNets, and formulate it as a mixed integer programming problem by taking account of the key design factors.
\item We propose a centralized solution algorithm. In the centralized scheme, we first relax the integer constraints and decompose the relaxed problem into two levels of subproblems. We derive the optimal solution to the relaxed problem with a series of transforms and Lagrangian dual methods, and more important, we prove that this solution is also optimal to the original problem.
\item We propose a distributed scheme based on a user bidding approach. We formulate the bidding process as a repeated game and prove that the game converges to a Nash Equilibrium (NE) that is optimal for each user and BS.
\item The proposed schemes are compared with three benchmark schemes through simulations. The results validate the superior performance of the proposed schemes.
\end{itemize}

In the remainder of this paper, we present the system model and problem formulation in Section~\ref{sec:prob}. The centralized and distributed schemes are presented in Sections~\ref{sec:cent} and~\ref{sec:dist}, respectively.
The simulation results are
discussed in Section~\ref{sec:sim}. We review related work in Section~\ref{sec:related} and conclude this paper in Section~\ref{sec:con}.

\section{Problem Formulation \label{sec:prob}}

The system considered in this paper is based on a noncooperative multi-cell network, and we focus on a tagged macrocell (denoted as macrocell $0$). Macrocell $0$ is a two-tier HetNet consisting of an MBS with a massive MIMO (indexed by $j=0$) and $J$ SBS's (indexed by $j=1,2,\ldots,J$), which collectively serve $K$ mobile users (indexed by $k=1,2,\ldots,K$).
We define binary variables for user association as
\begin{align}
  x_{k,j} \doteq \left\{ \begin{array}{ll}
	         1, & \mbox{user $k$ is connected to BS $j$} \\
					 0, & \mbox{otherwise,}
					          \end{array} \right. \nonumber \\
									k=1,2,\ldots,K, \; j=0,1,\ldots,J.
\end{align}
The MBS is always turned ON to guarantee coverage for users in the macrocell. On the other hand, the SBS's can be dynamically switched ON and OFF for energy savings.
The SBS ON/OFF indicator, denoted as $y_j$, is defined as
\begin{align}
  y_{j} \doteq \left\{ \begin{array}{ll}
	         1, & \mbox{SBS $j$ is turned ON} \\
					 0, & \mbox{SBS $j$ is turned OFF,}
					          \end{array} \right. 
										j=1,2,\ldots,J.
\end{align}

For user $k$ connecting to the MBS in macrocell $0$, let macrocell $l$ be the neighboring macrocell(s) that uses the same pilot sequence as user $k$.
The signal to interference plus noise ratio (SINR) of user $k$, 
$\gamma_{k,0}$,
is given by~\cite{Marzetta10}
\begin{align}\label{eq1}
  \gamma_{k,0} = \frac{\beta_{k,0}^2}{\sum_{l \ne 0} \beta_{k,l}^2},
\end{align}
where $\beta_{k,0}$ is the factor accounts for the propagation loss and shadowing effects between the MBS and user $k$, and $\beta_{k,l}$ is the propagation loss and shadowing factor between user $k$ (the MBS in macrocell $0$) and the MBS in macrocell $l$ (the user in macrocell $l$ that uses the same pilot sequence).

The {\em normalized} achievable data rate of user $k$, when it connects to the MBS in the tagged macrocell, is given as~\cite{Marzetta10}
\begin{align}\label{eq2}
  C_{k,0} = \left( 1 - \sum_{k = 1}^K x_{k,0} \left( \frac{T'}{T} \right) \right) \log \left( 1 + \gamma_{k,0} \right),
\end{align}
where $T$ is the time slot duration and
$T'$ is the time spent to transmit pilot for one user. Note that the number of users that can be served by the MBS at a time is upper bounded~\cite{Marzetta10}, and we denote the upper bound as $S_0$.
We assume that the channel state information (CSI) is collected by the MBS via uplink training (i.e., a time division duplex (TDD) system), so that the MBS can compute $\gamma_{k,0}$ and $C_{k,0}$ for all the users.

We assume that the SBS's use the same block of spectrum as the MBS and adopt frequency division multiple access (FDMA), i.e., the spectrum of SBS $j$, $j=1,2,\ldots,J$, is divided into $S_j$ channels and each of its user is allocated with one channel. Thus, the number of users that can be served by SBS $j$ at a time is upper bounded by $S_j$. We also assume that both the SBS's and users transmit with a fixed power, and the SINR $\gamma_{k,j}$ can be measured by the SBS via uplink training. Thus, for a user $k$ connecting to SBS $j$, the {\em normalized} achievable data rate of the user can be written as
\begin{align}\label{eq3}
  C_{k,j} = \frac{\log \left( 1 + \gamma_{k,j} \right)}{S_j}  =  \frac{R_{k,j}}{S_j}, \; j=1, 2, \ldots, J,
\end{align}
where ${R_{k,j}} = \log \left( 1 + \gamma_{k,j} \right)$, $j=0,1,\ldots,J$.

The power consumption model of HetNets is investigated in~\cite{Arnold10}. The power consumption of a BS consists of a static part and a dynamic part. The static part is the power required for the operation of a BS once it is turned ON, e.g., used by the cooling system, power amplifier, and baseband units. The dynamic part is dependent on the traffic load, e.g., used by the radio frequency unit. The study in~\cite{Arnold10} shows that the static part constitutes a dominant proportion of the total BS power consumption, especially for the MBS. Furthermore, the power consumption when a BS is fully-loaded is close to
that when the BS is under-loaded. Thus, we approximate the power consumption of each BS as
$P_j$, $j=0,1,\ldots,J$. The total power consumption of the HetNet is ${P_0} + \sum_{j = 1}^J y_j P_j$.

In this paper, we aim to maximize the energy efficiency of a HetNet with massive MIMO.
Let $\bf{x}$ and $\bf{y}$ denote the $\{ x_{k,j} \}$ matrix and the $\{ y_j \}$ vector, respectively.
The problem is formulated as
\begin{align}
& {\bf P1:\/}  \max_{\left\{ {\bf{x}}, {\bf{y}} \right\}} \frac{\sum_{k = 1}^K x_{k,0} C_{k,0} + \sum_{k = 1}^K \sum_{j = 1}^J x_{k,j} C_{k,j}}{ P_0 + \sum_{j = 1}^J y_j P_j } \label{eq4} \\
&\mbox{subject to:} \nonumber \\
&\hspace{0.3in} \sum_{j = 0}^J x_{k,j} \le 1, \; k=1,2,\ldots,K \label{eq5} \\
&\hspace{0.3in} \sum_{k = 1}^K x_{k,j} \le {S_j}, \; j=0,1,\ldots,J \label{eq6} \\
&\hspace{0.3in} \; x_{k,j} \le y_j, \; k=1,2,\ldots,K, \; j=1,2,\ldots,J \label{eq7} \\
&\hspace{0.3in} \; x_{k,j} \in \left\{ 0, 1 \right\}, \; k=1,2,\ldots,K, \; j=0,1,\ldots,J \label{eq8} \\
&\hspace{0.3in} \; {y_j} \in \left\{ 0, 1 \right\}, \; j=1,2,\ldots,J. \label{eq88}
\end{align}
In problem {\bf P1}, constraint~(\ref{eq5}) is due to fact that each user can connect to at most one BS; constraint~(\ref{eq6}) enforces the
upper bound for the number of users that can be served by BS $j$; and constraint~(\ref{eq7}) is because users can connect to SBS $j$ only when it is turned ON (i.e., when $y_j=1$).

\section{Centralized Solution Algorithm \label{sec:cent}}

In general the small cells are deployed by the operator and can use the X2 interface to communicate with each other as well as the MBS. A centralized algorithm can be useful in this context to coordinate their operations.
In this section, we solve the formulated problem with a centralized scheme and prove the optimality of the derived solution.

Problem {\bf P1\/} is an integer programming problem, which is generally NP-hard. To develop an effective solution algorithm, we relax the integer constraints by allowing $x_{k,j}$ and $y_j$ to take values in $\left[ 0, 1 \right]$. Thus, {\bf P1\/} is transformed into
\begin{align}
&{\bf P2:\/} \max_{\left\{ {\bf{x}}, {\bf{y}} \right\}} \frac{\sum_{k = 1}^K x_{k,0} C_{k,0} + \sum_{k = 1}^K \sum_{j = 1}^J x_{k,j} C_{k,j}}{ P_0 + \sum_{j = 1}^J y_j P_j } \label{eq9} \\
&\mbox{subject to:} \nonumber \\
&\hspace{0.3in} \sum\limits_{j = 0}^J {{x_{k,j}}}  \le 1, \; \forall \; k \label{eq10} \\
&\hspace{0.3in} \sum\limits_{k = 1}^K {{x_{k,j}}}  \le {S_j}, \; \forall \; k,j \label{eq11} \\
&\hspace{0.3in} \; {x_{k,j}} \le {y_j}, \; \forall \; k,j \label{eq12} \\
&\hspace{0.3in} \; 0 \le {x_{k,j}},{y_j} \le 1, \; \forall \; k,j. \label{eq13}
\end{align}

The decision variables $x_{k,j}$ and $y_j$ are coupled in the constraints, which are difficult to handle directly.
To solve the relaxed problem, we decompose problem {\bf P2\/} into two levels of subproblems. At the lower-level subproblem, we find the optimal solution for $\bf{x}$
for given values of $\bf{y}$.
Based on the solution at the lower-level subproblem, we obtain the optimal value of $\bf{y}$
at the higher-level subproblem through a subgradient approach.

\subsection{Optimal Solution of Problem {\bf P2\/} for a Given $\bf{y}$}

For given values of $\bf{y}$, the lower-level subproblem of problem {\bf P2\/} becomes the following problem ${\bf P3}$.
\begin{align}
& {\bf P3:\/}  \max_{\{ {\bf{x}} \}} \sum_{k = 1}^K x_{k,0} C_{k,0} + \sum_{k = 1}^K \sum_{j = 1}^J x_{k,j} C_{k,j} \label{eq14} \\
&\mbox{subject to:} \nonumber \\
&\hspace{0.3in} \sum\limits_{j = 0}^J {{x_{k,j}}}  \le 1, \; \forall \; k \label{eq15} \\
&\hspace{0.3in} \sum\limits_{k = 1}^K {{x_{k,j}}}  \le {S_j}, \; \forall \; k,j \label{eq16} \\
&\hspace{0.3in} \; {x_{k,j}} \le {y_j}, \; \forall \; k,j \label{eq17} \\
&\hspace{0.3in} \; 0 \le {x_{k,j}} \le 1, \; \forall \; k,j. \label{eq18}
\end{align}

We first show that problem {\bf P3\/} is a convex problem that can be solved in the dual domain.

\begin{lemma}\label{lemma1}
Problem {\bf P3\/} is a convex optimization problem.
\end{lemma}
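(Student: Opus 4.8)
The plan is to verify that Problem {\bf P3} meets the definition of a convex optimization problem: a concave objective being maximized (equivalently, a convex objective being minimized) over a convex feasible set.

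First I would examine the objective function~(\ref{eq14}). For a fixed $\bf{y}$, the coefficients $C_{k,0}$ and $C_{k,j}$ are constants: $C_{k,j} = R_{k,j}/S_j$ does not depend on $\bf{x}$, and $C_{k,0} = \bigl(1 - \sum_{k=1}^K x_{k,0} (T'/T)\bigr)\log(1+\gamma_{k,0})$ — wait, here I must be careful, because $C_{k,0}$ as written in~(\ref{eq2}) itself depends on $\bf{x}$ through the pilot-overhead term $\sum_k x_{k,0}(T'/T)$. So the macrocell contribution to the objective is actually $\sum_{k=1}^K x_{k,0}\bigl(1 - (T'/T)\sum_{k'=1}^K x_{k',0}\bigr) R_{k,0}$, which is quadratic in the variables $\{x_{k,0}\}$. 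The key step is to show this quadratic form is concave: writing $u = \sum_k x_{k,0}$ and noting the term equals $\sum_k x_{k,0} R_{k,0} - (T'/T)\,u \sum_k x_{k,0}R_{k,0}$, or more cleanly treating it as a function whose Hessian I would compute and show is negative semidefinite (it is a negative multiple of a rank-related outer-product structure plus zero curvature in the SBS variables). Alternatively, and more likely matching the paper's intent, one argues the objective is linear in $\bf{x}$ by treating $C_{k,0}$ as a given constant evaluated at the operating point, or because the overhead term is taken outside as a common load factor; in either reading the objective is then affine, hence both concave and convex. I would state whichever interpretation the paper adopts and justify concavity accordingly.

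Next I would check the constraints. Constraints~(\ref{eq15}), (\ref{eq16}), (\ref{eq17}) are all linear inequalities in $\bf{x}$ (with $\bf{y}$ fixed, (\ref{eq17}) is simply $x_{k,j} \le y_j$, a box-type bound), and~(\ref{eq18}) gives the box constraints $0 \le x_{k,j} \le 1$. A finite intersection of half-spaces and a box is a polyhedron, hence convex. Therefore the feasible region is convex, and combined with a concave objective, Problem {\bf P3} is a convex optimization problem. I would also note in passing that the feasible set is nonempty (e.g. $\bf{x} = \bf{0}$) and compact, so an optimum is attained, which sets up the subsequent dual-domain solution.

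The main obstacle — and the only place requiring genuine care rather than bookkeeping — is the macrocell rate term $C_{k,0}$: if the pilot-overhead factor is kept inside the optimization, the objective is quadratic and one must actually verify negative semidefiniteness of its Hessian; if it is treated as a constant, one must justify that approximation. I expect the paper to take the latter route (treating $C_{k,0}$ as a measured/given quantity, consistent with the sentence ``the MBS can compute $\gamma_{k,0}$ and $C_{k,0}$ for all the users''), making the objective affine and the lemma essentially immediate from the polyhedral structure of the constraints.
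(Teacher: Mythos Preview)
Your expectation about the paper's route is backwards. The paper does \emph{not} treat $C_{k,0}$ as a precomputed constant; it explicitly substitutes~(\ref{eq2}) into the objective, separates out the quadratic cross-term
\[
E \;=\; -\frac{T'}{T}\Bigl(\sum_{k=1}^K x_{k,0}\Bigr)\Bigl(\sum_{k=1}^K x_{k,0}R_{k,0}\Bigr),
\]
writes down its $K\times K$ Hessian with entries $-\tfrac{T'}{T}(R_{k,0}+R_{k',0})$, and argues ${\bf z}^T{\bf H}{\bf z}<0$ by applying the AM--GM bound $R_{k,0}+R_{k',0}\ge 2\sqrt{R_{k,0}R_{k',0}}$ inside the quadratic form. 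So the approach you sketched \emph{first} --- compute the Hessian of the macrocell term and check negative semidefiniteness --- is exactly what the paper does; the affine shortcut you ultimately bet on is not the paper's argument, and your proposal as written would leave the actual work undone.

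One caution if you carry out the Hessian route in earnest: the paper applies AM--GM to terms multiplied by $z_kz_{k'}$, which can be negative, so the inequality does not go through termwise. In fact a direct expansion gives ${\bf z}^T{\bf H}{\bf z}=-\tfrac{2T'}{T}\bigl(\sum_k z_k\bigr)\bigl(\sum_k z_kR_{k,0}\bigr)$, which is indefinite on $\mathbb{R}^K$ (e.g.\ $K=2$, $R=(1,3)$, $z=(2,-1)$ makes it positive). So the paper's execution of the very step you would need is itself delicate; if you follow its approach you should be prepared either to supply an additional argument restricting attention to the feasible region, or to acknowledge that concavity is being asserted rather than fully established.
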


\begin{proof}
Since the constraints of problem {\bf P3\/} are linear, we only need to show that
the objective function is concave.
Substituting~(\ref{eq2}), the objective function
is a combination of two parts: a linear function of $\bf{x}$ and the following term
\begin{align}\label{eq19}
  E \doteq - \frac{{{T'}}}{T}\left( {\sum\limits_{k = 1}^K {{x_{k,0}}} } \right)\left( {\sum\limits_{k = 1}^K {{x_{k,0}}{R_{k,0}}} } \right).
\end{align}

The Hessian
of $E$ is given by
\begin{align} 
 & {\bf{H}}_{K \times K} \nonumber \\
=& - \frac{T'}{T} \left( \hspace{-0.025in} {\begin{array}{*{20}{c}}
        {2{R_{1,0}}}&{{R_{1,0}} \hspace{-0.025in}+\hspace{-0.025in} {R_{2,0}}}& \cdots &{{R_{1,0}} \hspace{-0.025in}+\hspace{-0.025in} {R_{K,0}}}\\
        {{R_{1,0}} \hspace{-0.025in}+\hspace{-0.025in} {R_{2,0}}}&{2{R_{2,0}}}& \cdots &{{R_{2,0}} \hspace{-0.025in}+\hspace{-0.025in} {R_{K,0}}}\\
        \vdots & \vdots & \ddots & \vdots \\
        {{R_{1,0}} \hspace{-0.025in}+\hspace{-0.025in} {R_{K,0}}}&{{R_{2,0}} \hspace{-0.025in}+\hspace{-0.025in} {R_{K,0}}}& \cdots &{2{R_{K,0}}}
                             \end{array}} \hspace{-0.025in} \right). \nonumber
\end{align}
Let ${\bf{z}} = {\left[ {{z_1},{z_2},\ldots,{z_k}} \right]^T}$ be an arbitrary non-zero vector. We have
\begin{align}
& {\bf{z}}^T  \bf{Hz} \nonumber \\
=& - \frac{{{2T'}}}{T}\left[ {\sum\limits_{k = 1}^K {z_k^2{R_{k,0}}}  + \sum\limits_{k = 1}^K {\sum\limits_{{k^{'}} \ne k} {{z_k}{z_{{k^{'}}}}\left( {{R_{k,0}} + {R_{{k^{'}},0}}} \right)} } } \right] \nonumber \\
\mathop  < \limits^{\left( a \right)}& - \frac{{{2T'}}}{T}\left[ {\sum\limits_{k = 1}^K {z_k^2{R_{k,0}}}  + \sum\limits_{k = 1}^K {\sum\limits_{{k^{'}} \ne k} {{z_k}{z_{{k^{'}}}}\left( {2\sqrt {{R_{k,0}}{R_{{k^{'}},0}}} } \right)} } } \right] \nonumber \\
=& - \frac{{{2T'}}}{T}{\left( {\sum\limits_{k = 1}^K {{z_k}\sqrt {{R_{k,0}}} } } \right)^2} < 0. \label{eq21}
\end{align}
Inequality $(a)$ results from the fact that for two positive numbers, $m + n \ge 2\sqrt{mn}$ and the equality holds when $m=n$.

From~(\ref{eq21}), we conclude that $E$ given in~\eqref{eq19}
is a concave function. The objective function of problem {\bf P3\/} is thus concave, since the sum of concave functions is still concave. Therefore problem {\bf P3\/} is a convex optimization problem.
\end{proof}
\smallskip

To obtain the optimal solution of problem {\bf P3\/}, we introduce an auxiliary variable ${Q_0} \doteq \sum_{k = 1}^K {x_{k,0}}$. To deal with the coupling variables $Q_0$ and $x_{k,0}$, we further decompose problem {\bf P3\/} into two levels of subproblems. At the lower-level, we find the optimal solution for ${\bf{x}}$ for given $Q_0$. At the higher-level, the optimal value of $Q_0$ is obtained based on the solution of the lower-level subproblem.

\subsubsection{Lower-level Subproblem of Problem {\bf P3\/}}

For a given $Q_0$, problem {\bf P3\/} can be transformed into the following problem {\bf P4\/}.
\begin{align}
& {\bf P4:\/} \max_{\bf{x}} \sum_{k = 1}^K \left( \sum_{j = 1}^J x_{k,j} \frac{R_{k,j}}{S_j} + x_{k,0} R_{k,0} - x_{k,0} R_{k,0} \frac{Q_0 T'}{T} \right)  \label{eq22} \\
&\mbox{subject to:} \nonumber \\
&\hspace{0.3in} \sum\limits_{j = 0}^J {{x_{k,j}}}  \le 1, \; \forall \; k \label{eq23} \\
&\hspace{0.3in} \sum\limits_{k = 1}^K {{x_{k,j}}}  \le {S_j}, \; \forall \; k,j \label{eq24} \\
&\hspace{0.3in} {x_{k,j}} \le {y_j}, \; \forall \; k,j \label{eq25} \\
&\hspace{0.3in} \sum\limits_{k = 1}^K {{x_{k,0}}} = {Q_0} \label{eq26} \\
&\hspace{0.3in} 0 \le {x_{k,j}} \le 1, \; \forall \; k,j. \label{eq27}
\end{align}

Relax the constraints on $Q_0$ and $y_j$, the dual problem of {\bf P4\/} is given by
\begin{align}\label{eq28}
& \mbox{\bf P4-Dual:\/} \;\; \min_{\{ {\boldsymbol{\lambda}}, \mu\}} \; g ( {\boldsymbol{\lambda}}, \mu ),
\end{align}
where $\boldsymbol{\lambda }$ and $\mu$ are the Lagrangian multipliers for
constraints~\eqref{eq25} and~\eqref{eq26}, respectively; and
$g ( {\boldsymbol{\lambda}}, \mu )$ is given by
\begin{align}
&\; g ( {\boldsymbol{\lambda}}, \mu ) \nonumber \\
=&\; \max_{\{ \bf{x} \}} \sum_{k = 1}^K \left( \sum_{j = 1}^J x_{k,j} \frac{R_{k,j}}{S_j} \hspace{-0.025in}+\hspace{-0.025in} x_{k,0} R_{k,0} \hspace{-0.025in}-\hspace{-0.025in} x_{k,0} R_{k,0} \frac{Q_0 T'}{T} \right) + \nonumber \\
&\;  \sum_{k = 1}^K \sum_{j = 1}^J \lambda _{k,j} \left( y_j - x_{k,j} \right) +
\mu \left( Q_0 - \sum_{k = 1}^K x_{k,0} \right). \label{eq29}
\end{align}

The optimal solution of the dual problem can be obtained with the following subgradient method.
\begin{align}
\left\{ \begin{array}{ll}
   \lambda_{k,j}^{[ t + 1 ]} = \left[ \lambda_{k,j}^{[t]} + \tau \left( x_{k,j}^{[t]} - y_j^{[t]} \right) \right]^+, & \forall \; k,j  \\
   \mu^{[t+1]} = \mu^{[t]} + \tau \left( Q_0^{[t]} - \sum_{k = 1}^K x_{k,0}^{[t]} \right),
        \end{array} \right.  \label{eq30}
\end{align}
where $\tau$ is the step size for each iteration, $\left[ z \right]^+ \doteq \max \left\{ 0, z \right\}$, and $t$ is the index of iterations.

Given $\lambda_{k,j}$, $\mu$, and $Q_0$, the maximization of~(\ref{eq29}) is a standard
LP given by
\begin{align}
  {\bf P5:\/} & \;\;\max_{\{ \bf{x} \}}g\left( {\boldsymbol{\lambda}},\mu \right) \label{eq31} \\
  \mbox{subject to:} & \;\; (\ref{eq23}), (\ref{eq24}), \mbox{and } (\ref{eq27}). \nonumber
\end{align}
Problem {\bf P5\/} can be solved with effective methods for LPs, such as the simplex algorithm. We next show that the solution variables of problem {\bf P5\/} are integers rather than fractions in $(0,1)$, although with the relaxed constraint~(\ref{eq27}).

\begin{definition}\label{definition1}
A matrix ${\bf{A}}$ is totally unimodular if the determinant of every square submatrix of ${\bf{A}}$ is either 0, +1 or -1~\cite{Schrijver98}.
\end{definition}

\begin{property}\label{property1}
If the constraint matrix of an LP satisfies totally unimodularity, then it has all integral vertex solutions~\cite{Schrijver98}.
\end{property}

\begin{property}\label{property2}
If an LP has feasible optimal solutions, then at least one of the feasible optimal solutions occurs at a vertex of the polyhedron defined by its constraints~\cite{Berenstein97}.
\end{property}

To analyze the properties of problem {\bf P5\/}, we define a new vector ${\bf{\tilde x}}$ by concatenating all the columns of ${\bf{x}}$ as ${\bf{\tilde x}} = \left[ x_{1,1}, \ldots, x_{K,1}, x_{1,2}, \ldots, x_{K,2}, \ldots, \ldots, x_{1,J}, \ldots, x_{K,J} \right]^T$. We then rewrite problem {\bf P5\/} in the following standard form.
\begin{align}
  \max_{\bf{\tilde x}} & \;\; {\bf{c}} {\bf{\tilde x}} \\
  \mbox{subject to:}   & \;\; {\bf{A}} {\bf{\tilde x}} \le {\bf{b}},
\end{align}
where ${\bf{c}}$ is the vector of coefficients for~(\ref{eq31}), and the constraint matrix $\bf A$ and vector $\bf b$ are given, respectively, by
\begin{align}
& {\bf{A}} \doteq \left( {\begin{array}{*{20}{c}}
{1{\rm{~}}1{\rm{~}} \cdot  \cdot  \cdot {\rm{~}}1}&{0{\rm{~}}0{\rm{~}} \cdot  \cdot  \cdot {\rm{~}}0}&{ \cdot  \cdot  \cdot }&{0{\rm{~}}0{\rm{~}} \cdot  \cdot  \cdot {\rm{~}}0}\\
{0{\rm{~}}0{\rm{~}} \cdot  \cdot  \cdot {\rm{~}}0}&{1{\rm{~}}1{\rm{~}} \cdot  \cdot  \cdot {\rm{~}}1}&{ \cdot  \cdot  \cdot }&{0{\rm{~}}0{\rm{~}} \cdot  \cdot  \cdot {\rm{~}}0}\\
 \vdots & \vdots &{}& \vdots \\
{0{\rm{~}}0{\rm{~}} \cdot  \cdot  \cdot {\rm{~}}0}&{0{\rm{~}}0{\rm{~}} \cdot  \cdot  \cdot {\rm{~}}0}& \cdots &{1{\rm{~}}1{\rm{~}} \cdot  \cdot  \cdot {\rm{~ }}1}\\
{1{\rm{~0~}} \cdot  \cdot  \cdot {\rm{~0}}}&{1{\rm{~0~}} \cdot  \cdot  \cdot {\rm{~0}}}& \cdots &{1{\rm{~0~}} \cdot  \cdot  \cdot {\rm{~0}}}\\
{{\rm{0~}}1{\rm{~}} \cdot  \cdot  \cdot {\rm{~0}}}&{{\rm{0~}}1{\rm{~}} \cdot  \cdot  \cdot {\rm{~0}}}& \cdots &{{\rm{0~}}1{\rm{~}} \cdot  \cdot  \cdot {\rm{~ 0}}}\\
 \ddots & \ddots & \vdots & \ddots \\
{{\rm{0~0~}} \cdot  \cdot  \cdot {\rm{~}}1}&{{\rm{0~0~}} \cdot  \cdot  \cdot {\rm{~}}1}& \cdots &{{\rm{0~0~}} \cdot  \cdot  \cdot {\rm{~}}1}
\end{array}} \right) \label{eq35} \\
& {\bf{b}}_{\left( {K + J} \right) \times 1} \doteq \left[ {{S_1},\ldots,{S_J},1,\ldots,1} \right]^T.
\end{align}

\begin{lemma}\label{lemma2}
The constraint matrix ${\bf{A}}$ is totally unimodular.
\end{lemma}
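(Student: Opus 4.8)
\emph{Proposal.}
The plan is to read off the bipartite combinatorial structure of $\mathbf{A}$ and then run the standard incidence-matrix argument. First I would record the shape of $\mathbf{A}$: it has $K+J$ rows and $KJ$ columns, the column indexed by a pair $(k,j)$ (i.e.\ the variable $x_{k,j}$) has exactly two nonzero entries, both equal to $+1$ — one in the $j$-th of the first $J$ rows (the constraint $\sum_k x_{k,j}\le S_j$) and one in the $k$-th of the last $K$ rows (the constraint $\sum_j x_{k,j}\le 1$). Hence $\mathbf{A}$ is the node--edge incidence matrix of the complete bipartite graph between the $J$ SBS's and the $K$ users, and its rows split into two groups $R_1$ (the first $J$ ``SBS rows'') and $R_2$ (the last $K$ ``user rows'') so that every column has precisely one $+1$ in $R_1$ and one $+1$ in $R_2$.

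Given this, I would prove total unimodularity by induction on the order $n$ of a square submatrix $\mathbf{B}$ of $\mathbf{A}$, showing $\det\mathbf{B}\in\{0,+1,-1\}$. The base case $n=1$ is immediate since every entry of $\mathbf{A}$ lies in $\{0,1\}$. For the inductive step, distinguish three cases. (i) If some column of $\mathbf{B}$ is all zeros, then $\det\mathbf{B}=0$. (ii) If some column of $\mathbf{B}$ has exactly one nonzero entry (necessarily $+1$), expand the determinant along that column: $\det\mathbf{B}=\pm\det\mathbf{B}'$ for an $(n-1)\times(n-1)$ submatrix $\mathbf{B}'$ of $\mathbf{A}$, and the claim follows from the induction hypothesis. (iii) If every column of $\mathbf{B}$ has exactly two nonzero entries, then — since in $\mathbf{A}$ a column has only one nonzero in $R_1$ and only one in $R_2$ — each such column has one $+1$ in a row of $R_1\cap\mathrm{rows}(\mathbf{B})$ and one $+1$ in a row of $R_2\cap\mathrm{rows}(\mathbf{B})$; in particular both of these row sets are nonempty. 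Summing the rows of $\mathbf{B}$ in $R_1$ gives the all-ones row vector over the columns of $\mathbf{B}$, and so does summing the rows in $R_2$; subtracting gives a nontrivial vanishing linear combination of the rows of $\mathbf{B}$, hence $\det\mathbf{B}=0$. These cases are exhaustive, completing the induction and the proof.

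I do not expect a genuine obstacle here: the only thing to be careful about is that the row partition $(R_1,R_2)$ and the ``two $+1$'s per column'' property are inherited by every submatrix (deleting rows and columns cannot create a second nonzero in $R_1$ or in $R_2$ within a column), which is what makes case (iii) go through. Equivalently, one could bypass the induction entirely and cite the Heller--Tompkins/Ghouila-Houri criterion: a $\{0,\pm1\}$ matrix with at most two nonzeros per column is totally unimodular iff its rows can be two-colored so that equally signed pairs receive different colors and oppositely signed pairs the same color, and the partition $(R_1,R_2)$ works here because all nonzeros are $+1$. (The same argument is unaffected if one also includes $x_{k,0}$-columns or an $S_0$-row, since any extra column has only a single nonzero and is handled by case (ii).) The main step is simply recognizing case (iii) and the linear dependence it yields.
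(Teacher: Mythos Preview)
Your proposal is correct and follows essentially the same route as the paper: an induction on the size of a square submatrix with the three-way case split on the minimum column weight ($0$, $1$, or $2$), and in the weight-$2$ case the observation that the two row groups each sum to the all-ones vector, giving linear dependence. The paper phrases the same argument via a block decomposition of $\mathbf{A}$ into $\mathbf{W}_j$'s and identity blocks and first disposes of submatrices lying inside a single block before running the identical $\Delta_{v^*}\in\{0,1,2\}$ induction; your presentation is slightly leaner because it skips that preliminary case (it is subsumed by your cases (i)--(ii)) and makes explicit the bipartite incidence-matrix interpretation, which in turn lets you invoke Heller--Tompkins/Ghouila--Houri as a one-line alternative.
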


\begin{proof}
We divide the constraint matrix $\bf{A}$ into blocks as
\begin{align}\label{eq36}
{\bf{A}} = \left( {\begin{array}{*{20}{c}}
{{{\bf{W}}_1}}&{{{\bf{W}}_2}}& \cdots &{{{\bf{W}}_J}}\\
{{{\bf{I}}_1}}&{{{\bf{I}}_2}}& \cdots &{{{\bf{I}}_J}}
\end{array}} \right),
\end{align}
where each ${\bf{W}}_j$, $j=1,2,\ldots,J$, is a $J \times K$ matrix; the $j$th row of ${\bf{W}}_j$ is all $1$, while all the other rows are all $0$; and each ${\bf{I}}_j$, $j=1,2,\ldots,J$, is a $K \times K$ identity matrix.

Denote $G_n$ as an arbitrary $n \times n$ square submatrix of matrix ${\bf A}$. Obviously, the determinant of $G_n$ is either $0$ or $1$ when $n=1$. To analyze the determinant of $G_n$ for $n \geq 2$, the following two cases need to be considered.

\noindent {\bf{Case 1}}: $G_n$ is a submatrix of ${\bf{W}}_j$ or ${\bf{I}}_j$, $j=1,2,\ldots,J$. If $G_n$ is a submatrix of ${\bf{W}}_j$, we have $\det \left( G_n \right) = 0$, since at least one row would be all $0$. If $G_n$ is a submatrix of ${{{\bf{I}}_j}}$, $\det \left( {{G_n}} \right)$ would be either $0$ or $+1$, since ${{{\bf{I}}_j}}$ is an identity matrix.

\noindent {\bf{Case 2}}: The entries of $G_n$ are from more than one ${{{\bf{W}}_j}}$ or ${{{\bf{I}}_j}}$. We apply an induction method to analyze the determinant.
For $n=2$, $\det \left( {{G_n}} \right)$ can only be $0$, $+1$, or $-1$, since the four entries are either $0$ or $1$ with at least one $0$. Suppose $\det \left( {{G_{n-1}}} \right)$ can only be $0$, $+1$, or $-1$, we need to verify whether the same result hold for $\det \left( {{G_n}} \right)$. Denote ${G_n}\left( {u,v} \right)$ as the entry of $G_n$ at row $u$, column $v$. Let $v^* = \argmin_v \{ \sum_u G_n \left( u,v \right) \}$. Then column $v^*$ is the one with the minimum number of $1$s in $G_n$. Let $\Delta_{v^*}$ be the number of $1$s in column $v^*$,
which can be $0$, $1$, or $2$ according to the structure of ${\bf A}$ shown in~\eqref{eq35}.

    If $\Delta_{v^*}=0$, column $v^*$ of $G_n$ is all $0$ and $\det \left( G_n \right) = 0$.

    If $\Delta_{v^*}=1$, we calculate $\det \left( G_n \right)$ through column $v^*$ and have $\det \left( G_n \right) = \pm \det \left( G_{n-1} \right)$. According to the induction hypothesis, $\det \left( G_{n-1} \right)$ can only be $0$, $-1$, or $1$. Therefore, $\det \left( G_n \right)$ can only be $0$, $-1$, or $1$.

    If $\Delta_{v^*}=2$, each column of $G_n$ has exactly two $1$s, with one in ${\bf{W}}_j$ and the other in ${\bf{I}}_j$. Due to the equal number of $1$s in ${{{\bf{W}}_j}}$ and ${{{\bf{I}}_j}}$, we can obtain an all-zero row in $G_n$ through some elementary transformations, which yields $\det \left( G_n \right)=0$.

Consequently, the determinant of any square submatrix of ${\bf{A}}$ can only be either $0$, $-1$, or $1$. According to Definition~\ref{definition1}, we conclude that $\bf{A}$ is totally unimodular.
\end{proof}

\begin{lemma}\label{lemma3}
All the decision variables in the optimal solution to the relaxed LP, problem {\bf P5\/}, are integers in $\left\{ 0, 1 \right\}$.
\end{lemma}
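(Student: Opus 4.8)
The plan is to obtain the conclusion as an immediate consequence of Lemma~\ref{lemma2} together with Property~\ref{property1} and Property~\ref{property2}. The first point to nail down is that problem {\bf P5\/} is genuinely a linear program in the variable ${\bf{\tilde x}}$: for a fixed triple $({\boldsymbol{\lambda}},\mu,Q_0)$ the objective $g({\boldsymbol{\lambda}},\mu)$ in~\eqref{eq29} is affine in ${\bf{x}}$ (the term $x_{k,0}R_{k,0}Q_0T'/T$ is linear because $Q_0$ is a constant in this subproblem), and constraints~\eqref{eq23},~\eqref{eq24},~\eqref{eq27} are linear; hence, in the standard form $\max_{\bf{\tilde x}}{\bf{c}}{\bf{\tilde x}}$ subject to ${\bf{A}}{\bf{\tilde x}}\le{\bf{b}}$, ${\bf{\tilde x}}\ge{\bf{0}}$, the feasible set is a polyhedron. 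It is moreover bounded: the rows coming from the identity blocks ${\bf{I}}_j$ read $\sum_{j=1}^J x_{k,j}\le 1$, which together with ${\bf{\tilde x}}\ge{\bf{0}}$ confine the feasible set to $[0,1]^{JK}$, so it is a polytope and has vertices.

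Next I would assemble the integrality argument in three short steps. (i) The right-hand side ${\bf{b}}=[S_1,\ldots,S_J,1,\ldots,1]^T$ is an integer vector, since each $S_j$ is an integer number of channels. (ii) By Lemma~\ref{lemma2} the matrix ${\bf{A}}$ is totally unimodular, and folding in the non-negativity rows --- i.e.\ appending $-{\bf{I}}$ to ${\bf{A}}$ --- preserves total unimodularity, a standard fact~\cite{Schrijver98}; hence by Property~\ref{property1} every vertex of the feasible polytope has integer coordinates. (iii) Problem {\bf P5\/} is feasible (e.g.\ ${\bf{\tilde x}}={\bf{0}}$) and bounded, so by Property~\ref{property2} an optimal solution is attained at a vertex, and that vertex is integral; since~\eqref{eq27} enforces $0\le x_{k,j}\le 1$, each of its coordinates lies in $\{0,1\}$. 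Finally, a simplex-type solver for {\bf P5\/} returns a basic feasible optimal solution, i.e.\ a vertex, so the solution actually produced is one of these integral optima.

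The step that needs a little care is really bookkeeping rather than a genuine obstacle: one must remember to fold the box constraints into the constraint matrix before invoking Property~\ref{property1} (using preservation of total unimodularity under appending $\pm{\bf{I}}$), and for completeness one should note that the allocation of the MBS variables $x_{k,0}$ for a fixed integer $Q_0$ --- maximizing $\sum_k x_{k,0}R_{k,0}\left(1-Q_0T'/T\right)$ subject to $\sum_k x_{k,0}=Q_0$ and $0\le x_{k,0}\le 1$ --- is an LP of the same totally unimodular type, so its optimizer is $\{0,1\}$-valued as well. All the substantive work lies in Lemma~\ref{lemma2}; once total unimodularity of ${\bf{A}}$ is in hand, Lemma~\ref{lemma3} follows by a direct appeal to Properties~\ref{property1} and~\ref{property2}.
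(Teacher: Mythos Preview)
Your proposal is correct and takes essentially the same approach as the paper, which simply states that the lemma follows directly from Lemma~\ref{lemma2}, Property~\ref{property1}, and Property~\ref{property2}. Your version is more careful than the paper's one-line proof: you verify boundedness and feasibility explicitly, note that the right-hand side ${\bf{b}}$ is integral, observe that appending $\pm{\bf{I}}$ preserves total unimodularity, and even flag the bookkeeping issue that the MBS variables $x_{k,0}$ are not part of ${\bf{\tilde x}}$ as written in~\eqref{eq35} and must be handled by the same argument separately---a point the paper glosses over.
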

\begin{proof}
This lemma directly follows Lemma~\ref{lemma2}, Property~\ref{property1}, and Property~\ref{property2}.
\end{proof}

\medskip
\subsubsection{Higher-level Subproblem of Problem {\bf P3\/}}

We first show that the {\em duality gap} between the lower level subproblem {\bf P4\/} and its dual, problem {\bf P4-Dual\/}, is zero.

\begin{lemma}\label{lemma4}
Strong duality holds for problem {\bf P4\/}.
\end{lemma}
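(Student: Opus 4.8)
The plan is to recognize that, once $Q_0$ is fixed, problem~{\bf P4\/} is a linear program, and then to invoke the strong duality theorem for linear programming applied to a partial Lagrangian dual. First I would observe that for a fixed $Q_0$ the objective~\eqref{eq22} is an \emph{affine} function of $\bf{x}$: the only nonlinearity in the original objective of~{\bf P3\/}, namely the term $E$ in~\eqref{eq19} that couples $\sum_k x_{k,0}$ with $\sum_k x_{k,0} R_{k,0}$, has been removed by the substitution $Q_0 \doteq \sum_k x_{k,0}$, which turns it into the linear term $-\,x_{k,0} R_{k,0}\, Q_0 T'/T$. All of~\eqref{eq23}--\eqref{eq27} are affine in $\bf{x}$ as well, so~{\bf P4\/} is an LP.

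Second, I would verify that~{\bf P4\/} is feasible and has a finite optimum, so that LP duality is not vacuous. The higher-level subproblem only passes down values of $Q_0$ for which the MBS constraints are consistent, i.e. $0 \le Q_0 \le \min\{K, S_0\}$; for any such value the point $x_{k,0} = Q_0/K$ for all $k$ and $x_{k,j} = 0$ for all $j \ge 1$ satisfies~\eqref{eq23}--\eqref{eq27}, so the feasible set is nonempty. Moreover, the box constraint~\eqref{eq27} makes the feasible set compact, so the continuous linear objective attains a finite maximum on it.

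Third, I would conclude via strong LP duality. Note that~{\bf P4-Dual\/} is the \emph{partial} Lagrangian dual obtained by dualizing only constraints~\eqref{eq25} and~\eqref{eq26} (with multipliers $\boldsymbol{\lambda}$ and $\mu$), while~\eqref{eq23},~\eqref{eq24} and~\eqref{eq27} are retained inside the inner maximization that defines $g(\boldsymbol{\lambda},\mu)$ — that inner maximization being precisely problem~{\bf P5\/}. For a linear program the optimal value equals that of any such partial Lagrangian dual over a subset of its (affine) constraints; equivalently, since~{\bf P4\/} is a convex program whose constraints are all affine, the refined Slater condition — which for affine constraints requires only feasibility rather than strict feasibility — holds, so the duality gap is zero.

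The step I expect to require the most care is the second one: making sure that the range of $Q_0$ delivered by the higher-level subproblem always keeps~{\bf P4\/} feasible (otherwise $g(\boldsymbol{\lambda},\mu) \equiv +\infty$ and the claim would be empty), and being explicit that the strong-duality statement being invoked is the one for the \emph{partial} Lagrangian of an LP, not the full dual, since constraints~\eqref{eq23},~\eqref{eq24},~\eqref{eq27} are not relaxed. The LP structure itself makes everything else routine.
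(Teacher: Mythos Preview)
Your proposal is correct and takes essentially the same approach as the paper: both arguments rest on observing that {\bf P4\/} is an LP and then invoking the fact that for affine constraints the Slater condition reduces to mere feasibility, so strong duality holds. You are simply more thorough---explicitly verifying feasibility and boundedness, and noting that {\bf P4-Dual\/} is a partial Lagrangian---whereas the paper compresses all of this into a single sentence.
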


\begin{proof}
Since problem {\bf P4\/} is an LP, all the constraints are linear and the Slater condition reduces to feasibility~\cite{Boyd04}. Thus strong duality holds.
\end{proof}
\smallskip

Let $f({\bf{x}})$ be the objective function of problem {\bf P4\/} for a given ${\bf{x}}$. In the higher-level subproblem of problem {\bf P3\/}, we find the optimal value of $Q_0$ by solving the following problem.
\begin{align}\label{eq37}
{\bf P6:\/} \;\; \max_{\left\{ Q_0 \right\}} \; f ({\bf{x}} (Q_0)).
\end{align}

\begin{lemma}\label{lemma5}
Problem {\bf P6\/} can be solved with the following subgradient method.
\begin{align}\label{eq38}
   Q_0^{[t+1]} = Q_0^{[t]} + \tau \cdot \mu^{[t]}.
\end{align}
\end{lemma}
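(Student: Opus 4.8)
The plan is to recognize that the higher-level problem $\mathbf{P6}$ optimizes the auxiliary variable $Q_0$, and that the Lagrangian multiplier $\mu$ associated with the coupling constraint $\sum_{k=1}^K x_{k,0} = Q_0$ in problem $\mathbf{P4}$ is precisely the sensitivity of the optimal value with respect to $Q_0$. First, I would invoke Lemma~\ref{lemma4}: since strong duality holds for $\mathbf{P4}$, the optimal value $f(\mathbf{x}(Q_0))$ equals the optimal dual value $g(\boldsymbol{\lambda}^{\star},\mu^{\star})$, and the dual optimal $\mu^{\star}$ exists. The objective of $\mathbf{P6}$, viewed as a function $v(Q_0) \doteq f(\mathbf{x}(Q_0))$, is the optimal value of a parametric LP in which $Q_0$ appears both in the objective (through the term $-x_{k,0}R_{k,0}Q_0T'/T$ summed over $k$, which after substitution contributes the cross term) and on the right-hand side of the equality constraint~\eqref{eq26}.

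Next I would establish that $v(Q_0)$ is concave in $Q_0$ so that a subgradient ascent is the right tool, and then identify the subgradient. By standard LP sensitivity/perturbation analysis (e.g., \cite{Boyd04}), for a convex program the optimal Lagrange multiplier of a constraint gives a subgradient of the (negative) optimal-value function with respect to the corresponding perturbation; here, perturbing the right-hand side of~\eqref{eq26} from $Q_0$ to $Q_0 + \delta$ changes the optimal value by approximately $\mu^{\star}\delta$ to first order, so $\mu^{[t]}$ (the current dual iterate, which converges to $\mu^{\star}$) serves as a subgradient of $v$ at $Q_0^{[t]}$. Care is needed because $Q_0$ also enters the objective coefficients, not just the constraint; I would handle this by writing the Lagrangian~\eqref{eq29} explicitly as a function of $Q_0$ and differentiating: the partial derivative of $g(\boldsymbol{\lambda},\mu)$ with respect to $Q_0$, at the optimal primal-dual point, is $\mu - \frac{T'}{T}\sum_{k=1}^K x_{k,0} R_{k,0}$. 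However, I expect the authors intend the simpler reading in which the term $(Q_0 T'/T)\sum x_{k,0}R_{k,0}$ is treated via the substitution that already defines $\mathbf{P4}$'s objective with $Q_0$ held fixed inside that bracket, so that the only $Q_0$-dependence carried into the envelope theorem is through the constraint, leaving exactly $\partial v/\partial Q_0 = \mu$. I would state this cleanly and cite the envelope theorem / Danskin-type result for LP value functions.

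Finally, with the subgradient identified as $\mu^{[t]}$, the update~\eqref{eq38} is just projected (or unprojected, since $Q_0$ ranges over a simple interval $[0, S_0]$ implied by~\eqref{eq16}) subgradient ascent with step size $\tau$, and standard results on subgradient methods guarantee convergence to the optimal $Q_0$ for a suitably chosen (e.g., diminishing or sufficiently small constant) step size. I would close by noting consistency: at a fixed point of the joint iteration~\eqref{eq30}--\eqref{eq38}, both $x_{k,j}^{[t]} = y_j^{[t]}$ complementary slackness and $\sum_k x_{k,0}^{[t]} = Q_0^{[t]}$ hold, so $\mu^{[t]}$ stabilizes and $Q_0^{[t]}$ stops moving, confirming optimality.

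The main obstacle I anticipate is the dual role of $Q_0$: it is both a right-hand-side perturbation and an objective-coefficient perturbation, so a naive application of the multiplier-as-subgradient fact does not immediately yield $\mu$ alone. The crux of the argument is to show that, because $\mathbf{P4}$'s objective was defined with $Q_0$ frozen inside the quadratic-cross term (that is the whole point of introducing the auxiliary variable and the two-level decomposition), the envelope computation on $v(Q_0)$ picks up only the constraint-multiplier contribution, so the subgradient is exactly $\mu^{[t]}$ and~\eqref{eq38} follows. Getting that bookkeeping right — and stating precisely which LP the envelope theorem is applied to — is the delicate step; everything else is routine convex-analysis and subgradient-method machinery.
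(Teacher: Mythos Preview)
Your proposal is essentially correct and rests on the same idea as the paper: show that the optimal dual multiplier $\mu^{*}$ for constraint~\eqref{eq26} is a (super)gradient of the value function $v(Q_0)=f(\mathbf{x}(Q_0))$, so that~\eqref{eq38} is a valid subgradient ascent step. The execution differs. Where you invoke the envelope theorem / LP sensitivity analysis as a black box, the paper proves the supergradient inequality directly: from strong duality (Lemma~\ref{lemma4}) it writes $f^{*}(Q_0')=\mathcal{L}(\mathbf{x}^{*},\boldsymbol{\lambda}^{*},\mu^{*})$, uses optimality of $\mathbf{x}^{*}$ in the Lagrangian to bound this below by $\mathcal{L}(\mathbf{x},\boldsymbol{\lambda}^{*},\mu^{*})$ for any competing feasible $\mathbf{x}$ with $\sum_k x_{k,0}=Q_0$, drops the nonnegative $\lambda$-terms, and reads off $f^{*}(Q_0)\le f^{*}(Q_0')+\mu^{*}(Q_0')(Q_0-Q_0')$. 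Your higher-level route is quicker to state; the paper's is elementary and self-contained. The ``dual role of $Q_0$'' issue you flag is genuine, and in fact the paper's own chain glosses over it in exactly the way you anticipate: in its final step it identifies $\max_{\{\mathbf{x}:\sum_k x_{k,0}=Q_0\}} f(\mathbf{x})$ with $f^{*}(Q_0)$ even though the $f(\mathbf{x})$ in that line still carries $Q_0'$ inside the objective coefficient $(1-Q_0'T'/T)$. So your caution about that bookkeeping is well placed; neither argument makes the residual term $-(T'/T)(Q_0'-Q_0)\sum_k x_{k,0}R_{k,0}$ explicit.
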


\begin{proof}
Let ${\bf{x}}^* (Q_0^{'})$ be the optimal solution to problem {\bf P4\/} for a given value of $Q_0^{'} = \sum_{k = 1}^K x_{k,0}^*$, and $f^{*} (Q_0^{'})$ be the optimal objective value
with solution ${\bf{x}}^* (Q_0^{'})$. Denote $\mathcal{L}(\cdot)$ as the {\em Lagrangian function}. For another feasible solution $\bf{x}$ to problem {\bf P4\/} with a given value $Q_0 = \sum_{k = 1}^K x_{k,0}$, the following equalities and inequalities hold.
\begin{align}
& \;\; f^*(Q_0^{'})
\mathop = \limits^{(a)} \mathcal{L} \left( {{\bf{x}}^*}, {\boldsymbol{\lambda}}^* ( Q_0^{'} ), {\mu^*} ( Q_0^{'} ) \right) \nonumber \\
\mathop  \ge \limits^{\left( b \right)} & \mathcal{L} \left( {\bf{x}}, {\boldsymbol{\lambda}}^* ( Q_0^{'} ), {\mu^*} ( Q_0^{'} ) \right) \nonumber \\
= &\; f({\bf{x}}) + \sum_{k = 1}^K \sum_{j = 1}^J \lambda_{k,j}^* ( y_j - x_{k,j} ) + {\mu^*} \left( Q_0^{'} - \sum_{k = 1}^K x_{k,0} \right) \nonumber \\
= &\; f({\bf{x}}) + \sum_{k = 1}^K \sum_{j = 1}^J \lambda_{k,j}^* ( y_j - x_{k,j} ) + \mu^* \left( Q_0 - \sum_{k = 1}^K x_{k,0} \right) + \nonumber \\
  &\; {\mu^*}( Q_0^{'} - Q_0 ) \nonumber \\
\mathop  \ge \limits^{(c)} &\; f\left( {\bf{x}} \right) + {\mu^*}\left( Q_0^{'} - Q_0 \right), \label{eq39}
\end{align}
where equality $(a)$ is due to strong duality, inequality $(b)$ is due to the optimality of ${{\bf{x}}^*}$, and inequality $(c)$ is due to the constraints of problem {\bf P4\/} and the nonnegativity of ${\boldsymbol{\lambda}}$. Note that $(c)$ holds for any ${\bf{x}}$ such that $\sum_{k = 1}^K x_{k,0} = Q_0$.

In particular, we have
\begin{align} 
f^* (Q_0^{'}) \ge&\; \max_{\{{\bf{x}}|\sum_{k = 1}^K x_{k,0} = Q_0 \}} \left\{ f ({\bf{x}}) + \mu^* ( Q_0^{'} - Q_0) \right\} \nonumber \\
 =&\; f^* ( Q_0 ) + \mu^* (Q_0^{'}) ( Q_0^{'} - Q_0 ). \label{eq40}
\end{align}
It follows~\eqref{eq40} that
\begin{align} 
f^* (Q_0) \le f^* ( Q_0^{'} ) + {\mu^*} ( Q_0^{'} ) (Q_0 - Q_0^{'} ). \nonumber
\end{align}
By definition, ${\mu^*} (Q_0^{'})$ is a subgradient of $f^* ( Q_0 )$. Thus, we conclude that problem {\bf P6\/} can be solved with~(\ref{eq38}).
\end{proof}

\medskip

\subsection{Optimal Value of ${\bf{y}}$}

Let $D^*(\bf{y})$ be the optimal value of problem {\bf P2\/} for given ${\bf{y}}$. The optimal ${\bf{y}}$ can be
solved from the following problem.
\begin{align}\label{eq42}
   {\bf P7:\/} \;\; \max_{\{ {\bf{y}} \}} \; D^* ( {\bf{y}} ).
\end{align}

\begin{lemma}\label{lemma6}
Problem {\bf P7\/} can be solved by the following subgradient method.
\begin{align}\label{eq43}
{\bf{y}}^{[t+1]} = {\bf{y}}^{[t]} + \tau \cdot {\boldsymbol{\nu}}^{[t]},
\end{align}
where ${\boldsymbol{\nu}} = \left[ \sum_{k = 1}^K \lambda_{k,1}^*, \sum_{k = 1}^K \lambda_{k,2}^*, \ldots, \sum_{k = 1}^K \lambda_{k,J}^* \right]$.
\end{lemma}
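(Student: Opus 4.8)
The plan is to establish that the vector $\boldsymbol{\nu}$ in the statement is a subgradient of $D^*(\mathbf{y})$ regarded as a function of $\mathbf{y}$, so that~(\ref{eq43}) is a standard (projected) subgradient ascent on a concave function over the box $[0,1]^J$, followed by a rounding step to recover an integral $\mathbf{y}$. The argument mirrors the proof of Lemma~\ref{lemma5}, with $\mathbf{y}$ now playing the role that $Q_0$ played there, and with the multipliers $\lambda_{k,j}^*$ of constraint~(\ref{eq12})/(\ref{eq25}) playing the role of $\mu^*$.

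First I would record that, for a fixed $\mathbf{y}$, the denominator $P_0+\sum_{j=1}^J y_jP_j$ of problem {\bf P2\/} is a constant, so $D^*(\mathbf{y})$ is that constant multiplied into the optimal value of problem {\bf P3\/} with the extra constraint~(\ref{eq17}); by Lemma~\ref{lemma4}, together with the zero-gap $Q_0$-decomposition exploited in Lemma~\ref{lemma5}, the whole chain {\bf P3\/}--{\bf P4\/}--{\bf P4-Dual\/} has no duality gap, and there exist a primal optimizer $\mathbf{x}^*(\mathbf{y})$ and dual optimizers $\boldsymbol{\lambda}^*(\mathbf{y}),\mu^*(\mathbf{y})$ attaining it. Next I would fix two feasible vectors $\mathbf{y}$ and $\mathbf{y}'$ and repeat the computation~(\ref{eq39})--(\ref{eq40}) with $\mathbf{y}'$ as the base point: by strong duality and optimality of $\mathbf{x}^*(\mathbf{y}')$,
\begin{align}
D^*(\mathbf{y}') = \mathcal{L}\left(\mathbf{x}^*(\mathbf{y}'),\boldsymbol{\lambda}^*(\mathbf{y}'),\mu^*(\mathbf{y}')\right) \ge \mathcal{L}\left(\mathbf{x},\boldsymbol{\lambda}^*(\mathbf{y}'),\mu^*(\mathbf{y}')\right) \nonumber
\end{align}
for every $\mathbf{x}$ feasible for problem {\bf P4\/}. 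Writing out $\mathcal{L}$ and splitting the coupling term as $\lambda_{k,j}^*(y_j'-x_{k,j}) = \lambda_{k,j}^*(y_j-x_{k,j}) + \lambda_{k,j}^*(y_j'-y_j)$, then taking the maximum over all $\mathbf{x}$ that are feasible for the given $\mathbf{y}$ (for which $x_{k,j}\le y_j$, so the first group of terms is nonnegative since $\boldsymbol{\lambda}^*\ge 0$), one gets
\begin{align}
D^*(\mathbf{y}') \ge D^*(\mathbf{y}) + \sum_{j=1}^J \left(\sum_{k=1}^K \lambda_{k,j}^*(\mathbf{y}')\right)(y_j'-y_j) = D^*(\mathbf{y}) + \boldsymbol{\nu}^T(\mathbf{y}'-\mathbf{y}). \nonumber
\end{align}
Equivalently $D^*(\mathbf{y}) \le D^*(\mathbf{y}') + \boldsymbol{\nu}^T(\mathbf{y}-\mathbf{y}')$, i.e., $\boldsymbol{\nu}$ is a subgradient of the concave function $D^*$ at $\mathbf{y}'$; hence~(\ref{eq43}) is a legitimate subgradient step, and with an appropriate (e.g., diminishing) step size $\tau$ and projection onto $[0,1]^J$ it converges to the maximizer of {\bf P7\/}.

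I expect the main obstacle to be the bookkeeping around the ratio objective: one must argue cleanly that the subgradient of $D^*$ is exactly $\boldsymbol{\nu}$ rather than $\boldsymbol{\nu}$ corrected by the derivative $P_j$ of the denominator --- that is, that for the purpose of the $\mathbf{y}$-update it is legitimate to treat the denominator as locally fixed (or to fold it into the definition of $D^*$) --- and that the nested $Q_0$/dual decomposition indeed returns a single pair $(\mathbf{x}^*,\boldsymbol{\lambda}^*)$ that is simultaneously primal-optimal and a maximizer of the Lagrangian at $\mathbf{y}'$, so that the step from $\mathcal{L}(\mathbf{x},\cdot)$ back up to $D^*(\mathbf{y})$ via the maximization over feasible $\mathbf{x}$ is valid. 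A secondary point is that $\mathbf{y}$ is ultimately integral, so after the continuous iteration one still has to check (in the spirit of Lemma~\ref{lemma3}) that rounding $\mathbf{y}$ to $\{0,1\}^J$, and re-solving {\bf P3\/} for the rounded $\mathbf{y}$, does not sacrifice optimality.
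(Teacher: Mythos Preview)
Your proposal is correct and follows essentially the same route as the paper: fix $\mathbf{y}'$, use strong duality to write $D^*(\mathbf{y}')=\mathcal{L}(\mathbf{x}^*,\boldsymbol{\lambda}^*(\mathbf{y}'))$, bound $\mathcal{L}$ below for any feasible $\mathbf{x}$, split $\lambda_{k,j}^*(y_j'-x_{k,j})$ into $(y_j-x_{k,j})+(y_j'-y_j)$, drop the first piece by feasibility and $\boldsymbol{\lambda}^*\ge 0$, and maximize over $\mathbf{x}$ to obtain $D^*(\mathbf{y}')\ge D^*(\mathbf{y})+\boldsymbol{\nu}^T(\mathbf{y}'-\mathbf{y})$. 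The paper's proof is exactly this computation, written more tersely and without the $\mu$-term (which vanishes since $Q_0$ has already been optimized); you are in fact more careful than the paper in flagging the denominator-of-the-ratio and integrality-of-$\mathbf{y}$ issues, which the paper does not address in this lemma but handles separately in Theorem~1.
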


\begin{proof}
Let ${\bf{x}}^* ({\bf{y}}')$ be the optimal solution to problem {\bf P2\/} for a given ${\bf{y}}'$ and $\bf{x}$ be another feasible solution for given $\bf{y}$. Similar to the analysis for problem {\bf P6\/} in~(\ref{eq39}), the following equalities and inequalities hold.
\begin{align} 
&\; D^* ( {\bf{y}}' ) = D ( {\bf{x}^*} ( {\bf{y}}' ) ) = \mathcal{L} ( {\bf{x}}^*, {\boldsymbol{\lambda}}^* ( {\bf{y}}' ) ) \nonumber \\
 \ge&\; \mathcal{L} ( {\bf{x}}, {\boldsymbol{\lambda}}^* ( {\bf{y}}' ) ) = D ( {\bf{x}} ) + \sum_{k = 1}^K {\boldsymbol{\lambda}}_k^* ( {\bf{y}}' )  ( {\bf{y}}' - {\bf{x}}_k ) \nonumber \\
 =&\; D ( {\bf{x}} ) + \sum_{k = 1}^K {\boldsymbol{\lambda}}_k^* ( {\bf{y}}' ) ( {\bf{y}} - {\bf{x}}_k ) + \sum_{k = 1}^K {\boldsymbol{\lambda}}_k^* ( {\bf{y}}' ) ( {\bf{y}}' - {\bf{y}} ) \nonumber \\
 \ge&\; D ( {\bf{x}} ) + \sum_{k = 1}^K {\boldsymbol{\lambda}}_k^* ( {\bf{y}}' ) ( {\bf{y}}' - {\bf{y}} ) = D ( {\bf{x}} ) + {\boldsymbol{\nu}} ( {\bf{y}}' - {\bf{y}} ), \nonumber
\end{align}
where ${\bf{x}}_k = \left[ x_{k,1}, x_{k,2}, \ldots, x_{k,J} \right]^T$ and ${\boldsymbol{\lambda}}_k^* ( {\bf{y}}' )$ is the $k$th row of ${\boldsymbol{\lambda}}^* ( {\bf{y}}' )$.

In particular, we have
\begin{align} 
D^* ({\bf{y}}') \ge \max_{\{ {\bf{x}} \le {\bf{y}} \}} \left\{ D ({\bf{x}}) + {\boldsymbol{\nu}} ( {\bf{y}}' \hspace{-0.025in}-\hspace{-0.025in} {\bf{y}} ) \right\} = D^* ( {\bf{y}} ) + {\boldsymbol{\nu}} ({\bf{y}}' \hspace{-0.025in}-\hspace{-0.025in} {\bf{y}}). \nonumber
\end{align}
Thus, ${\boldsymbol{\nu}}$ is a subgradient of ${\bf{y}}$. We conclude that problem {\bf P7\/} can be solved by~(\ref{eq43}).
\end{proof}

\subsection{Optimality Analysis}

The procedure of the centralized user association and BS ON-OFF switching strategy is summarized in Algorithm~\ref{alg:1}.
\begin{algorithm} [!t]
\small
\SetKwRepeat{doWhile}{do}{while}
\SetAlgoLined
	Initialize ${Q_0},{\bf{y}},{\boldsymbol{\lambda }}$, and ${\boldsymbol{\mu }}$ \;
\doWhile{$($${\bf{y}}$ does not converge$)$}{
 \doWhile{$($$Q_0$ does not converge$)$}{
  \doWhile{$($${\boldsymbol{\lambda }},{\boldsymbol{\mu }}$ do not converge$)$}{
			Solve problem {\bf P5\/} with a standard LP solver \;
			Update ${\boldsymbol{\lambda}},{\boldsymbol{\mu}}$ as in~(\ref{eq30}) \;
  }
  Update $Q_0$ as in~(\ref{eq38}) \;
  }
  Update ${\bf{y}}$ as in~(\ref{eq43})\;
}
\caption{Centralized User Association and BS ON-OFF Switching Strategy}
\label{alg:1}
\end{algorithm}

\begin{theorem}
The solution produced by Algorithm~\ref{alg:1} is optimal to problem {\bf P1\/}.
\end{theorem}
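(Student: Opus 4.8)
The plan is to combine a relaxation argument with the convergence and integrality properties already established. First I would record the elementary fact that {\bf P2} is the continuous relaxation of {\bf P1}: the feasible set of {\bf P1} is contained in that of {\bf P2} and the objectives~\eqref{eq4} and~\eqref{eq9} coincide on it, so the optimal value of {\bf P1} is at most that of {\bf P2}. Hence it suffices to exhibit a point that is (i) optimal for {\bf P2} and (ii) has all entries of ${\bf x}$ and ${\bf y}$ in $\{0,1\}$; such a point is feasible for {\bf P1}, and its objective value is then squeezed between $\mathrm{opt}({\bf P1})$ and $\mathrm{opt}({\bf P2})$, forcing it to be optimal for {\bf P1} as well. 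The remaining task is to show that Algorithm~\ref{alg:1} returns exactly such a point.

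For optimality with respect to {\bf P2}, I would check the nested decomposition from the inside out. For fixed ${\bf y}$ the denominator $P_0+\sum_j y_j P_j$ is a positive constant, so maximizing the ratio over ${\bf x}$ is equivalent to solving {\bf P3}; by Lemma~\ref{lemma1} this is a convex program, and after introducing $Q_0=\sum_k x_{k,0}$ and dualizing~\eqref{eq25}--\eqref{eq26} one obtains {\bf P4} and {\bf P4-Dual}, for which Lemma~\ref{lemma4} gives a zero duality gap. Thus the innermost loop (solve the LP {\bf P5} for ${\bf x}$, update $(\boldsymbol{\lambda},\mu)$ by~\eqref{eq30}) is dual subgradient ascent whose limit $(\boldsymbol{\lambda}^*,\mu^*)$ is dual-optimal for {\bf P4}, and by strong duality together with complementary slackness the accompanying primal iterate ${\bf x}^*$ is optimal for {\bf P4}. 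The $Q_0$-loop then converges to the value of $Q_0$ optimal for {\bf P3}, since by Lemma~\ref{lemma5} $\mu^*$ is a subgradient of $f^*(Q_0)$ and~\eqref{eq38} is the matching subgradient step; and the outer ${\bf y}$-loop converges to the ${\bf y}$ optimal for {\bf P2}, since by Lemma~\ref{lemma6} the vector $\boldsymbol{\nu}=\bigl(\sum_k\lambda_{k,1}^*,\ldots,\sum_k\lambda_{k,J}^*\bigr)$ is a subgradient of $D^*({\bf y})$ and~\eqref{eq43} is its subgradient step. Invoking the standard convergence of the subgradient method under a suitably chosen step size $\tau$, each loop attains the optimum of its own subproblem, and composing the three nested maximizations yields $\mathrm{opt}({\bf P2})$ together with a maximizer $({\bf x}^*,{\bf y}^*)$.

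For integrality, the association matrix is handled directly by Lemma~\ref{lemma3}: the LP {\bf P5} solved in the innermost loop is taken to a vertex of its feasible polyhedron (Property~\ref{property2}), and by the total unimodularity of ${\bf A}$ (Lemma~\ref{lemma2}) together with Property~\ref{property1} every such vertex is $0/1$-valued, so the returned ${\bf x}^*$ lies in $\{0,1\}$. For ${\bf y}^*$ I would use the structure of {\bf P2}: each $y_j$ enters only through the positive term $y_jP_j$ in the denominator and through the constraints~\eqref{eq12}, so with the optimal ${\bf x}^*$ held fixed the objective is maximized by taking $y_j$ as small as feasibility permits, namely $y_j=\max_k x_{k,j}^*\in\{0,1\}$; this choice only shrinks the denominator, so it remains optimal for {\bf P2} while being integral, and we may take it as the ${\bf y}$ component of the returned solution. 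Then $({\bf x}^*,{\bf y}^*)$ is integral and satisfies~\eqref{eq5}--\eqref{eq7} (which are exactly~\eqref{eq10}--\eqref{eq12}), hence is feasible and, by the first paragraph, optimal for {\bf P1}.

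The part I expect to be the real work is the middle step: making rigorous that the \emph{nested} subgradient recursions converge to the \emph{exact} optimum of {\bf P2} rather than to an approximately optimal or merely stationary point. This rests on assembling three ingredients with care --- the zero duality gap of Lemma~\ref{lemma4}, so that dual convergence of the inner loop transfers to primal optimality of ${\bf x}^*$; the fact that the subgradient formulas~\eqref{eq38} and~\eqref{eq43} are valid only once the respective inner problem has been solved to optimality, which forces the loops to be executed to convergence in the stated order; and a step size $\tau$ chosen so that each subgradient iteration actually converges. A secondary but delicate point, used above, is to make precise that at termination ${\bf x}^*$ can simultaneously be taken as an integral vertex optimum of {\bf P5} and as an optimizer of {\bf P4} and {\bf P3}, and that the reported ${\bf y}$ can be reduced to $\max_k x_{k,j}^*$ without loss of optimality.
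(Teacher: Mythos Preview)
Your proposal is correct and follows essentially the same route as the paper: relax {\bf P1} to {\bf P2}, invoke Lemma~\ref{lemma3} (via Lemma~\ref{lemma2} and Properties~\ref{property1}--\ref{property2}) for the integrality of ${\bf x}^*$, argue from the structure of~\eqref{eq12} and the denominator that $y_j^*=\max_k x_{k,j}^*\in\{0,1\}$, and conclude that an integral optimum of {\bf P2} is feasible and optimal for {\bf P1}. In fact you are more careful than the paper, which simply asserts that Algorithm~\ref{alg:1} returns the optimum of {\bf P2} without discussing convergence of the nested subgradient loops; your identification of that convergence step as the real work is well placed.
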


\begin{proof}
According to Lemma~\ref{lemma3}, the variables in the optimal solution of problem {\bf P5\/} are all binary
for any feasible values of ${\boldsymbol{\lambda}}$, ${\boldsymbol{\mu}}$, and $Q_0$. With
Algorithm~\ref{alg:1}, the optimal solution for ${\bf{x}}$ in problem {\bf P2\/} is also a binary vector.

Consider constraint~\eqref{eq12}, i.e., $x_{k,j} \le y_j, \forall k,j$. If there exists a ${x_{k,j}} = 1$ for any $k$, then ${y_j}$ must be equal to $1$ to satisfy the constraint. If ${x_{k,j}} = 0$ for all $k$, then the constraint~\eqref{eq12}
is always satisfied; the optimal value for ${y_j}$ must be $0$ since the objective value with ${y_j}=0$ is larger than that with ${y_j}=1$. Thus, the optimal solution for ${\bf{y}}$ in problem {\bf P2\/} is also a binary vector.
Since the optimal solution of problem {\bf P2\/} is binary,
the solution is also feasible and optimal for problem {\bf P1\/}. We conclude that Algorithm~\ref{alg:1}
is optimal for problem {\bf P1\/}.
\end{proof}

\section{Distributed Solution Scheme \label{sec:dist}}

In the previous section, we proposed a centralized scheme that is optimal but requires global network information. However, centralized control may not always be feasible due to constraints on complexity, overhead, or scalability.
In this section, we propose a distributed scheme based on a user bidding approach. The bidding procedure is formulated as a repeated game between users and BS's, and we demonstrate that the game will converge to a Nash Equilibrium (NE).

\subsection{Distributed User Association and SBS ON-OFF Control}

We assume that the utility of each user $k$ is positively correlated to the achievable rate $C_{k,j}$ and user $k$ always seeks to maximize $C_{k,j}$. The {\em preference list} of user $k$ is determined by the $C_{k,j}$ values for different $j$. For instance, if $j^* = \argmax_j \{ C_{k,j} \}$, BS $j^*$ is on top of user $k$'s preference list. The preference list of BS $j$ is also determined by $C_{k,j}$ is a similar way. Denote the price paid by user $k$ to BS $j$ as $p_{k,j}$. It is reasonable to assume that $p_{k,j}$ is proportional to $C_{k,j}$. The utility of BS $j$ is defined as the payments made by all its connected users subtract the cost of power consumption $q_j$, given by
\begin{align}\label{eq46}
   \sum_{k = 1}^K x_{k,j} p_{k,j} - {q_j}.
\end{align}

\begin{algorithm} [!t]
\small
\SetAlgoLined
	  \While{$($convergence not achieved$)$}{
				\eIf{$($more than $S_j$ users bid for BS $j$$)$}{
					 Put the top $S_j$ users with the highest bids in the waiting list and reject the other users \;
				}{
					Put all users in the waiting list \;
		 }
	 }
\caption{Distributed User Association Strategy for BS's}
\label{alg:2}
\end{algorithm}

To maximize the total utility under the constraint $\sum_{k = 1}^K x_{k,j} = S_j$, the distributed user association strategy for the BS's is presented in Algorithm~\ref{alg:2}. Note that the values of $p_{k,j}$ and $q_j$ can be optimized to further enhance the performance, we omit the analysis in this paper due to the page limit.
The repeated bidding game has two stages. In the {\em first} stage, each user bids for the top BS in its preference list. After receiving the bids, the MBS and SBS's decide the user association strategy according to Algorithm~\ref{alg:2} and feedback the decision to users.

In the {\em second} stage, if a user has been rejected, the BS that rejected it would be deleted from its preference list. Then, the user bids for the most desirable BS among the remaining ones. Upon receiving the bids, each BS compares the new bids with those in its waiting list, and makes decisions on user association according to Algorithm~\ref{alg:2}. The rejected users then make another round of bids following the order of their preference lists, and the BS's again make decisions and feedback to users, and so forth. The bidding procedure is continued until convergence is achieved, i.e., the users in the waiting list of each BS do not change anymore. 

After convergence of the user association result, each SBS determines the value of its ON-OFF decision variable by comparing the payments and energy cost as follows.
\begin{align}\label{eq47}
y_j = \left\{ {\begin{array}{ll}
				1, & \mbox{if} \; \sum_{k = 1}^K x_{k,j} p_{k,j} > {q_j} \\
				0, & \mbox{otherwise,}
\end{array}} \right. \; j = 1,2,\ldots,J.
\end{align}
It can be seen from~(\ref{eq47}) that SBS $j$ chooses to be turned ON only when it is profitable to do so. The users in the waiting list of SBS $j$
will connect to the MBS if SBS $j$ is turned OFF.

\subsection{Convergence Analysis}

We next prove that the repeated game converges and an NE
can be achieved.

\begin{lemma}\label{lemma7}
The sequence of bids made by a user is non-increasing in its preference list.
\end{lemma}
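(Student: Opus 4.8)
The plan is to argue directly from the structure of the repeated bidding game described in the previous subsection. Recall that a user always bids for the BS currently at the top of its preference list, and that whenever a user is rejected by a BS, that BS is permanently deleted from the user's preference list before the user bids again. Since the preference list of user $k$ is ordered by the values $C_{k,j}$ (equivalently, by the prices $p_{k,j}$, which are assumed proportional to $C_{k,j}$), deleting the current top entry leaves a list whose new top entry has a $C_{k,j}$ value no larger than the one just removed.

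First I would make precise what "the sequence of bids made by a user" means: it is the sequence of prices $p_{k,j}$ (or equivalently the utilities/rates $C_{k,j}$) corresponding to the BS that user $k$ bids for in successive rounds $t = 1, 2, \ldots$. Let $j^{[t]}$ denote the BS that user $k$ bids for in round $t$. Then I would observe that the user bids again in round $t+1$ only if it was rejected in round $t$; by the rule of Algorithm~\ref{alg:2}, rejection removes $j^{[t]}$ from the preference list, and in round $t+1$ the user bids for $j^{[t+1]} = \argmax_j \{ C_{k,j} : j \text{ still in the list} \}$. Because $j^{[t]}$ was the maximizer over a superset (the list including $j^{[t]}$) and is now excluded, we get $C_{k,j^{[t+1]}} \le C_{k,j^{[t]}}$, and hence $p_{k,j^{[t+1]}} \le p_{k,j^{[t]}}$ by proportionality. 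This is the monotone (non-increasing) property claimed.

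The second point I would address is why a user never "revisits" a higher entry: once a BS is deleted from the preference list it is never reinserted, so the list only shrinks over time, which is exactly what guarantees that each successive argmax is taken over a smaller feasible set and therefore cannot increase. I would also note the edge case where a user is never rejected — then it makes only one bid and the statement is vacuously true — and the case where the list becomes empty, at which point the user stops bidding (and, per~\eqref{eq47} and the surrounding discussion, falls back to the MBS).

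The main obstacle, such as it is, is not a deep technical difficulty but rather pinning down the bookkeeping: one must be careful that the "sequence of bids" is indexed by the rounds in which the user \emph{actually} bids (not all global rounds), and that the permanent-deletion rule is invoked correctly so that the feasible set of BS's for the argmax is genuinely nested decreasing. Once that is set up, the inequality $C_{k,j^{[t+1]}} \le C_{k,j^{[t]}}$ follows immediately from the definition of $\argmax$ over nested sets, and monotonicity of $p_{k,j}$ in $C_{k,j}$ finishes the proof. This lemma will presumably feed into the convergence argument (together with a companion statement about BS's and the finiteness of the number of BS's/users) to show the game reaches an NE.
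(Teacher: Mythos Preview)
Your argument is correct and matches the paper's approach exactly: the paper's proof simply notes that a user first bids for the most desirable BS, deletes it upon rejection, and then bids for the top of the updated (smaller) list, so the chosen BS's are non-increasing in the preference order. Your version is more formally stated but follows the same reasoning step for step.
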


\begin{proof}
To maximize utility, a user first bids for the most desirable BS in its preference list. If rejected, the user deletes the BS from its preference list and bids for the most desirable BS in the updated list. Thus, the BS's chosen by a user is non-increasing in its preference list.
\end{proof}

\begin{lemma}\label{lemma8}
The sequence of bids in the waiting list of a BS is non-decreasing in its preference list.
\end{lemma}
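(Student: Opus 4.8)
The plan is to prove the statement directly from two structural facts already fixed in the model, namely that BS $j$ ranks users by their achievable rates $C_{k,j}$ and that each bid $p_{k,j}$ is proportional to $C_{k,j}$. Because both the preference order and the bid amount are governed by the \emph{same} quantity $C_{k,j}$, I expect the monotonicity to follow without any appeal to the dynamics of the game. In particular, the bid $p_{k,j}$ that user $k$ submits to BS $j$ is a fixed number determined by $C_{k,j}$ and does not depend on the round in which it is submitted, so the ``sequence of bids in the waiting list'' is unambiguous and the same argument applies to the waiting list at any round, the converged one included.

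First I would recall the two defining relations: BS $j$'s preference list places user $k_1$ above user $k_2$ exactly when $C_{k_1,j} \ge C_{k_2,j}$, and $p_{k,j} = \alpha_j C_{k,j}$ for some positive constant $\alpha_j$ (or, more generally, $p_{k,j}$ is an increasing function of $C_{k,j}$). Composing these two relations yields the key monotone property: for any two bidders, the one ranked higher by BS $j$ submits the weakly larger bid, i.e. $C_{k_1,j}\ge C_{k_2,j}$ implies $p_{k_1,j}\ge p_{k_2,j}$.

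Next I would apply this to the waiting list. Let the waiting list of BS $j$ be $\{k_1,\ldots,k_m\}$ with $m\le S_j$, and index its members in increasing order of preference, so that $C_{k_1,j}\le C_{k_2,j}\le\cdots\le C_{k_m,j}$. By the monotone property above, $p_{k_1,j}\le p_{k_2,j}\le\cdots\le p_{k_m,j}$, which is precisely the assertion that the sequence of bids in the waiting list is non-decreasing along BS $j$'s preference ordering. I would also invoke Algorithm~\ref{alg:2} to note that the waiting list is the set of at most $S_j$ highest bids retained by BS $j$; this is consistent with its members being the most preferred bidders and confirms that no ``out-of-order'' bid can survive in the list.

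The main obstacle is interpretive rather than computational. I must pin down the direction of the monotonicity so that ``non-decreasing in its preference list'' is read correctly (indexing from the least-preferred up to the most-preferred retained user, which matches the direct proportionality $p_{k,j}\propto C_{k,j}$), and I must justify that each bid is a well-defined per-pair quantity, independent of the round, so that speaking of a single ``sequence of bids'' in the waiting list is meaningful. Once these conventions are fixed, the proof reduces to the one-line composition of the preference ordering with the proportionality between $p_{k,j}$ and $C_{k,j}$.
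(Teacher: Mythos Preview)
Your reading of the lemma is off, and consequently you have proved a different (and essentially tautological) statement. In the paper, Lemma~\ref{lemma8} is the BS-side counterpart of Lemma~\ref{lemma7}: the ``sequence'' refers to the \emph{temporal} evolution of the waiting list across rounds of the repeated game, not to the ordering of bids within a single waiting list. The paper's own proof makes this explicit---it argues from Algorithm~\ref{alg:2} that when new bids arrive, a fully loaded BS retains only the most profitable ones, so the quality of the bids it holds (as measured by the BS's preference) can never go down from one round to the next. This dynamic monotonicity is exactly what Theorem~\ref{theorem2} needs: there it is invoked to conclude that if user $k$ is more preferred by BS $j$ than some user $k'$ currently in the list, yet $k$ is not in the list, then $k$ could never have bid for $j$ (because once in, a bid cannot be displaced by a worse one).

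What you prove instead is that, within any fixed waiting list, listing members in increasing order of $C_{k,j}$ also lists them in increasing order of $p_{k,j}$. Since both the BS's preference and the bid are monotone in the same quantity $C_{k,j}$, this is immediate and carries no content about the game's dynamics. Your explicit disclaimer that the argument proceeds ``without any appeal to the dynamics of the game'' is precisely the problem: the lemma \emph{is} a statement about the dynamics. To fix the proof, argue round by round from Algorithm~\ref{alg:2}: if the BS is under capacity it adds bids without removing any; if at capacity it keeps only the top $S_j$ bids among old and new, so any bid that was in the list and gets removed is dominated by every bid that remains. Hence the minimum retained bid (equivalently, the least-preferred retained user) is non-decreasing over rounds, which is the intended conclusion.
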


\begin{proof}
According to Algorithm~\ref{alg:2}, if a BS is not fully loaded, it put all the bids into the waiting list. If a BS is fully loaded, it compares the new incoming bids with the bids already in the list, and selects the most profitable bids to maximize its own utility.
\end{proof}

\begin{theorem}\label{theorem2}
The repeated game converges.
\end{theorem}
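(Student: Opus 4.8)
The plan is to prove termination by the classical monotonicity-plus-finiteness argument for deferred-acceptance matching, built on the two invariants already in hand. First I would record what Lemmas~\ref{lemma7} and~\ref{lemma8} give us: by Lemma~\ref{lemma7}, whenever a user is rejected it deletes the rejecting BS and moves strictly down its preference list, so \emph{no user ever bids for the same BS twice}; by Lemma~\ref{lemma8}, each BS's waiting list only improves over time, i.e., a bid already in a BS's waiting list can only be displaced by a strictly more profitable bid, and never re-admitted after being dropped.

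Next I would introduce a potential function. Let $N^{[t]}$ denote the total number of distinct (user, BS) pairs for which a bid has been issued up to and including round $t$. Because each of the $K$ users can bid for each of the $J+1$ BS's at most once (Lemma~\ref{lemma7}), we have $N^{[t]} \le K(J+1)$ for every $t$, so $\{N^{[t]}\}$ is a nondecreasing, integer-valued, bounded sequence.

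Then I would establish the key dichotomy for a single round: either the round achieves convergence (no BS's waiting list changes), or the potential strictly increases. In a non-convergent round, the waiting list of some BS changes; by Algorithm~\ref{alg:2} together with Lemma~\ref{lemma8}, such a change is triggered by at least one newly received bid, and by Lemma~\ref{lemma7} that bid is for a BS the bidding user has not previously bid for, hence $N^{[t+1]} > N^{[t]}$. Since $N^{[t]}$ cannot exceed $K(J+1)$, at most $K(J+1)$ rounds can be non-convergent; therefore after finitely many rounds the waiting lists stop changing, which is exactly the convergence criterion stated for the bidding procedure. The SBS ON-OFF decisions in~\eqref{eq47} are then computed once from the converged association, so they add no further iteration.

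The main obstacle I anticipate is the middle step: making precise what a \emph{round} is and verifying the dichotomy cleanly. In particular I must check that a rejected user always either has a strictly lower, not-yet-tried BS to bid for, or exhausts its preference list and simply stops bidding (falling back to being unserved, or to the always-on MBS where it is itself subject to the $S_0$ cap) — so that the potential argument never stalls on a user who is "stuck" re-bidding. Finiteness of each preference list and the fixed set of BS's handle this, but it is the part that deserves the most careful writing; the bounded-monotone-potential conclusion is then immediate.
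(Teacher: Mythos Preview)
Your argument is correct and is the standard deferred-acceptance termination proof: each rejection forces a user strictly down a finite preference list, so the total bid count is an integer potential bounded by $K(J+1)$, and strict increase in any non-convergent round forces termination. The caveat you flag about a user exhausting its list is harmless exactly as you say: such a user simply stops bidding and no longer contributes to changes in any waiting list.

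The paper takes a different route. It argues by contradiction: if the process did not converge, there would exist a blocking pair $(k,j)$ with $k$ preferring $j$ to its current BS $j'$ and $j$ preferring $k$ to some $k'$ in its list; from Lemma~\ref{lemma8} it deduces $k$ never bid for $j$, hence (by the order of $k$'s bids) never bid for $j'$ either, contradicting $k$'s presence in $j'$'s waiting list. This is really a stability argument, and the passage from ``does not converge'' to ``a blocking pair exists at some round'' is left implicit; it is closer in spirit to proving that the limit is an NE (the content of Theorem~\ref{theorem3}) than to proving finite termination per se.

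Your potential-function approach is cleaner for the stated claim and yields the quantitative bonus of an explicit $K(J+1)$ round bound, which the paper's proof does not provide. The paper's blocking-pair argument, on the other hand, dovetails directly with the Nash-equilibrium statement that follows, so it anticipates Theorem~\ref{theorem3} at the cost of being less direct about termination itself.
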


\begin{proof}
Suppose the game does not converge. Then, there must be a user $k$ and a BS $j$ such that: (i) user $k$ prefers BS $j$ to its current connecting BS ${j'}$, (ii) BS $j$ prefers user $k$ to user $k'$, who is currently in the waiting list of BS $j$. Under this circumstance, user $k$ is a better choice and BS $j$ can accept the bid of user $k$. User $k$ will bid for BS $j$.

Based on Lemma~\ref{lemma8}, the sequence of bids received by BS $j$ is non-decreasing. As user $k$ is a better choice than user $k'$ for BS $j$ while user $k$ is not in the waiting list, it must be the case that user $k$ has never bidden for BS $j$. Since user $k$ prefers BS $j$ to BS $j'$, user $k$ must bid for BS $j$ prior to BS $j'$. We conclude that user $k$ also has never bidden for BS $j'$. However, user $k$ is currently in the waiting list of BS $j'$, indicating that user $k$ has bidden for BS $j'$ before, which is a contradiction. Thus, we conclude that the repeated game converges.
\end{proof}

\begin{lemma}\label{lemma9}
During any round of the repeated game, if user $k$ bids for BS $j$, user $k$ cannot have a better choice than BS $j$.
\end{lemma}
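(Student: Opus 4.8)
The plan is to combine the two monotonicity properties established in Lemmas~\ref{lemma7} and~\ref{lemma8} with the order in which a user walks down its preference list. First I would recall from Lemma~\ref{lemma7} that user $k$ always bids in non-increasing order of preference: it starts at the top of its preference list and advances to the next BS only after being rejected by the current one. Consequently, if user $k$ bids for BS $j$ in some round of the game, then user $k$ has already bid for --- and been rejected by --- every BS that it ranks strictly above $j$.

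Second, I would argue that any such rejection is permanent. When a BS $j''$ ranked above $j$ rejected user $k$, it did so (by Algorithm~\ref{alg:2}) because it was already fully loaded with $S_{j''}$ users whose bids are not lower than $p_{k,j''}$, i.e., whose $C_{k',j''}$ values place them above user $k$ on $j''$'s preference list. By Lemma~\ref{lemma8}, the bids held in the waiting list of $j''$ are non-decreasing across the rounds of the game, so the minimum bid kept by $j''$ --- its effective acceptance threshold --- can only increase over time. Since $p_{k,j''}$ is fixed and already at or below that threshold, BS $j''$ will never accept user $k$ in any later round; it is permanently unavailable to user $k$. (One may also note that $j''$ was deleted from user $k$'s preference list upon that rejection, so user $k$ never even re-bids for it.)

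Putting the two steps together: at the moment user $k$ bids for BS $j$, every BS that user $k$ prefers to $j$ has already rejected it and will not take its bid. Hence, among the BSs that would still accept user $k$, none ranks above $j$ in user $k$'s preference list --- that is, user $k$ has no better choice than BS $j$. I would close by noting that \emph{a better choice} must here be read as a BS that user $k$ both prefers to $j$ and that is willing to accept user $k$'s bid; under that reading the statement of the lemma is exactly what has been shown.

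The step requiring the most care is the permanence-of-rejection claim: one must be sure that ``rejected once'' genuinely implies ``rejected forever,'' which hinges on reading Lemma~\ref{lemma8} as a statement about how each BS's waiting list evolves across \emph{all} rounds (so that the acceptance threshold is monotone), not merely as a within-round snapshot. A secondary subtlety worth a sentence is that a user may sit in a BS's waiting list and later be bumped out; this does not weaken the argument, since we invoke only rejections issued by BSs ranked strictly above $j$, and by the traversal order of Lemma~\ref{lemma7} all of those rejections have already become final before user $k$ ever bids for $j$.
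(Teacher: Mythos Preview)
Your proposal is correct and follows essentially the same argument as the paper: both proofs observe that a user reaches BS $j$ only after being rejected by every more-preferred BS $j'$, and then invoke the monotonicity of each BS's waiting list (Lemma~\ref{lemma8}) to conclude that such rejections are permanent. Your write-up is more careful about the permanence-of-rejection step and the meaning of ``better choice,'' but the underlying structure matches the paper's proof.
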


\begin{proof}
According to the bidding strategy of users, if user $k$ bids for BS $j$, either BS $j$ is its most preferable choice or user $k$ has been rejected by another BS $j'$. The reason BS $j'$ rejects user $k$ is because it already held $S_j$ bids that are better than user $k$. Since the sequence of bids for each BS is non-decreasing, it is impossible for user $k$ to enter the waiting list of BS $j'$. Thus, user $k$ can not have a better choice than BS $j$.
\end{proof}

\begin{theorem}\label{theorem3}
The repeated game converges to a Nash equilibrium that is optimal for each user and BS.
\end{theorem}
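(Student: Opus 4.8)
The plan is to show that the state reached upon convergence (which exists by Theorem~\ref{theorem2}) is a Nash equilibrium, by verifying that neither a user nor a BS has a profitable unilateral deviation, and then to argue that the resulting association is simultaneously the most preferred outcome attainable for every user and every BS. I would organize the argument in three steps: (i) no user can benefit from deviating; (ii) no BS can benefit from deviating, including its ON--OFF choice; and (iii) the converged association is ``optimal'' in the sense of a user/BS-optimal stable matching.

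For step (i), let user $k$ be connected to BS $j$ at convergence and consider a deviation in which user $k$ bids for some other BS $j''$. If $j''$ is below $j$ in user $k$'s preference list, the deviation strictly lowers its utility since $C_{k,j''} \le C_{k,j}$, so it is not profitable. If $j''$ is above $j$, then by the bidding rule user $k$ must have bid for $j''$ before bidding for $j$, and by Lemma~\ref{lemma9} together with Lemma~\ref{lemma8} it was rejected by $j''$ and the bids held by $j''$ only increase afterwards, so $j''$ rejects user $k$ again; the deviation again yields no improvement. Hence staying with $j$ is a best response for every user.

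For step (ii), fix BS $j$. By Algorithm~\ref{alg:2}, at convergence the waiting list of BS $j$ consists of the $S_j$ highest bids it ever received (or all of them if fewer than $S_j$ users bid), so replacing any accepted user by a rejected one cannot increase $\sum_{k} x_{k,j} p_{k,j}$ and no reshuffling of its user set is profitable. The only remaining degree of freedom for an SBS is its ON--OFF variable $y_j$, which by~\eqref{eq47} is already set to the value maximizing its utility~\eqref{eq46} given the payments of its waiting-list users; the MBS is always ON and incurs its fixed cost regardless. Thus no BS has a profitable deviation, and combined with step (i) the converged profile is a Nash equilibrium. For step (iii), I would argue as in the Gale--Shapley deferred-acceptance analysis: convergence means there is no blocking pair $(k,j)$ with $k$ preferring $j$ to its current match and $j$ preferring $k$ to one of its matched users (this is exactly what the proof of Theorem~\ref{theorem2} rules out), so the association is a stable matching; and since users are the proposing side and each bids down its preference list only after being rejected, a standard induction over the bidding rounds shows that no user is ever rejected by a BS to which it could be matched in \emph{any} stable matching, whence each user ends up with the best BS it can obtain and, symmetrically, each BS with its most preferred feasible set of users.

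The main obstacle I expect is step (iii): making the optimality claim precise and airtight requires the blocking-pair/stability characterization and the Gale--Shapley-style induction, and one must check that the SBS ON--OFF decisions and the capacity constraints $S_j$ do not break the standard argument --- in particular, a user displaced because an SBS later switches OFF must be shown to fall back to the MBS without creating a new blocking pair. Steps (i) and (ii) are comparatively routine once Lemmas~\ref{lemma7}--\ref{lemma9} and Algorithm~\ref{alg:2} are in hand.
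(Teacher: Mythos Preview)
Your proposal is correct and follows the paper's approach: the paper simply argues that Algorithm~\ref{alg:2} gives each BS the maximum-payment user set (with the ON--OFF choice then optimal by~\eqref{eq47}), invokes Lemma~\ref{lemma9} to conclude that each user's current BS is its best attainable option, and cites Theorem~\ref{theorem2} for convergence. Your steps (i)--(ii) are a more careful unpacking of exactly this argument; your step (iii) with the Gale--Shapley induction goes beyond what the paper actually proves, since the paper's ``optimal for each user and BS'' is meant only in the best-response/best-attainable sense already delivered by Lemma~\ref{lemma9} and Algorithm~\ref{alg:2}, and the SBS turn-off concern you flag is not addressed there either.
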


\begin{proof}
It can be easily seen from Algorithm~\ref{alg:2} that each BS holds the set users with the maximum sum payments. For an SBS, if the sum of user payments is less than its power cost, the optimal strategy is to turn OFF so that the utility is increased from a negative value to zero.

From Lemma~\ref{lemma9}, if a user is currently in the waiting list of a BS, this BS is the best possible option for the user. Thus, when the game converges, the outcome is also optimal for each user.
Following Theorem~\ref{theorem2}, we conclude that the repeated bidding game converges to an NE.
\end{proof}

\section{Simulation Study \label{sec:sim}}

We evaluate the proposed centralized and distributed schemes with MATLAB simulations. We use the path loss and SINR models in~\cite{Bethanabhotla14}. A $1000$m $\times$ $1000$m area is considered. The massive MIMO BS is located at the center, the SBS's are randomly distributed in the area. We consider two cases for user distribution. In the first case, users are uniformly distributed across the area. In the second case, we divide the area into $8$ subareas, the number of users in each subarea is a Poisson random variable
and the traffic in each subarea is not uniformly distributed.
The power of the MBS and SBS's is set to $40$ dBm, we assume the available bandwidth for all the BS's is 1 MHz. The number of channels is $50$ for SBS's, thus $S_j=50$ for $j=1,\ldots,J$. We also set $S_0=100$.

We compare with two heuristic schemes for BS ON-OFF switching strategy. {\em Heuristic 1} is based on a load-aware strategic BS sleeping mode proposed in~\cite{Soh13}. Specifically, SBS $j$ is turned ON with probability $\min \left\{ \theta_j / S_j, 1 \right\}$, where $\theta_j$ is the number of users within the coverage of SBS $j$. {\em Heuristic 2} is based on a scheme presented in~\cite{Ashraf10}, where an SBS is activated
whenever there is a user enters its coverage area. We also consider the case that all the SBS's are always turned ON as a benchmark (termed Always On). For all the schemes, the user association strategy is based on the solution of problem ${\bf{P3}}$ for given BS ON-OFF states.

The EEs of different schemes are presented in Figs.~\ref{fig1}--\ref{fig4}. In Figs.~\ref{fig1} and~\ref{fig2}, it can be seen that the EEs of Always ON and Heuristic 2 schemes decreases when the number of SBS's becomes large, due to the fact that some SBS's are under-utilized while still consume power. The EEs of the proposed schemes and Heuristic 1 do not decrease as the number of SBS's grows, since these schemes can dynamically adjust to the traffic demand and turn OFF redundant SBS's. As expected, the centralized scheme achieves the highest energy efficiency. Note that the EE of Heuristic 2 is close to the Always ON scheme, since an SBS is easily activated when the numbers of users and SBS's are sufficiently large. Compare Figs.~\ref{fig1} and.~\ref{fig2}, it can be seen that when the traffic load is varying over subareas, the gaps between the proposed schemes and other schemes are slightly increased since larger gains can be achieved when the traffic demand becomes geographically dynamic. In Figs.~\ref{fig3} and~\ref{fig4}, we also find that the proposed schemes outperform the other schemes under different numbers of users, while the gaps become smaller as the number of users grows. This is because when the traffic load is increasing, activating more SBS's can effectively offload the traffic load from MBS and enhance the sum rate more significantly. Thus, the optimal BS ON-OFF strategy is close to that of Always ON, resulting in the reduced performance gap.

We also evaluate the sum rate of the schemes in Figs.~\ref{fig5}--\ref{fig8}. From Figs.~\ref{fig5} and~\ref{fig6}, we find that the sum rate is improved with more SBS's as a result of more offloading and high average SINR. Obviously, Always ON offers the best performance since it is possible for each user to connect to the BS with the largest achievable rate. The sum rate of the centralized scheme is close to that of Always ON. This is because we choose to turn OFF the SBS's that are not energy efficient, i.e., the sum rates of users connecting to these SBS's are not large enough and it is not worthy to turn ON these SBS's. The distributed scheme also achieves a high sum rate performance, because the SBS's with negative utility are turned OFF. Since the sum rates of these SBS's are relatively small, the performance loss is small. It is also observed that when the traffic is not uniformly distributed, the sum rates of the two heuristic schemes are slightly decreased since fewer SBS's are likely to be activated and more users are served by the MBS.

In Fig.~\ref{fig9}, an example of the repeated user bidding game is given. It can be seen that the game converges after a few number of rounds with the proposed algorithm. Note that, after a maximum utility is achieved after $6$ rounds, the utility of the BS's is increased again, due to the fact that some SBS's with negative utility are turned OFF. The utility of users is decreased since some SBS's are turned OFF and their users are handed over to the MBS.

\begin{figure*}[!t]
	\begin{minipage}[t]{.32\linewidth}
		\centering
		\includegraphics[width=2.25in, height=1.53in]{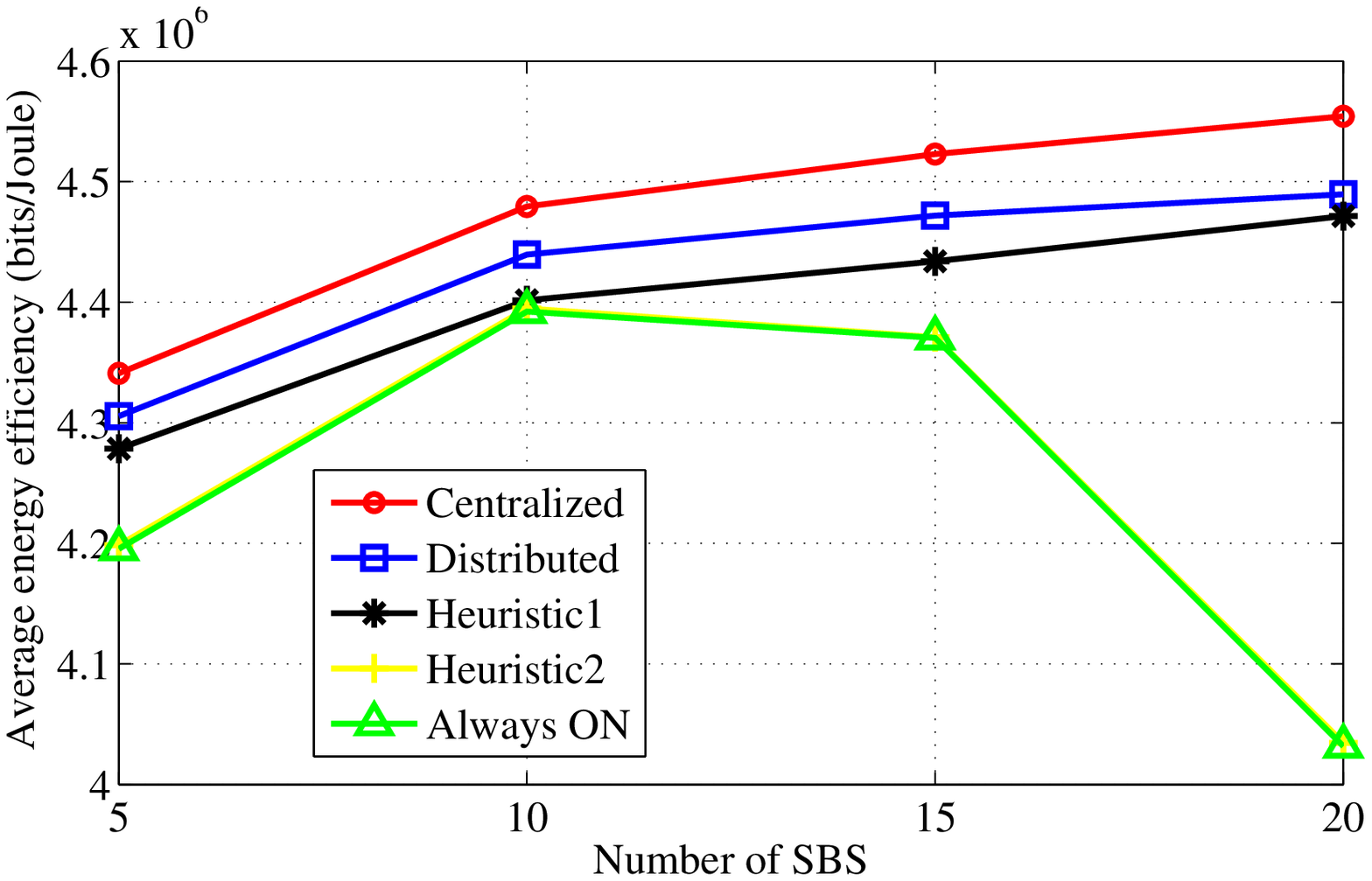}
		\caption{\footnotesize Average system EE versus number of SBS's for different
		BS ON-OFF switching strategies: $100$ users, uniformly distributed.}
		\label{fig1}
		\vspace{-0.1in}
	\end{minipage}
	\hspace{0.025in}
	\begin{minipage}[t]{.32\linewidth}
		\centering	
		\includegraphics[width=2.25in, height=1.53in]{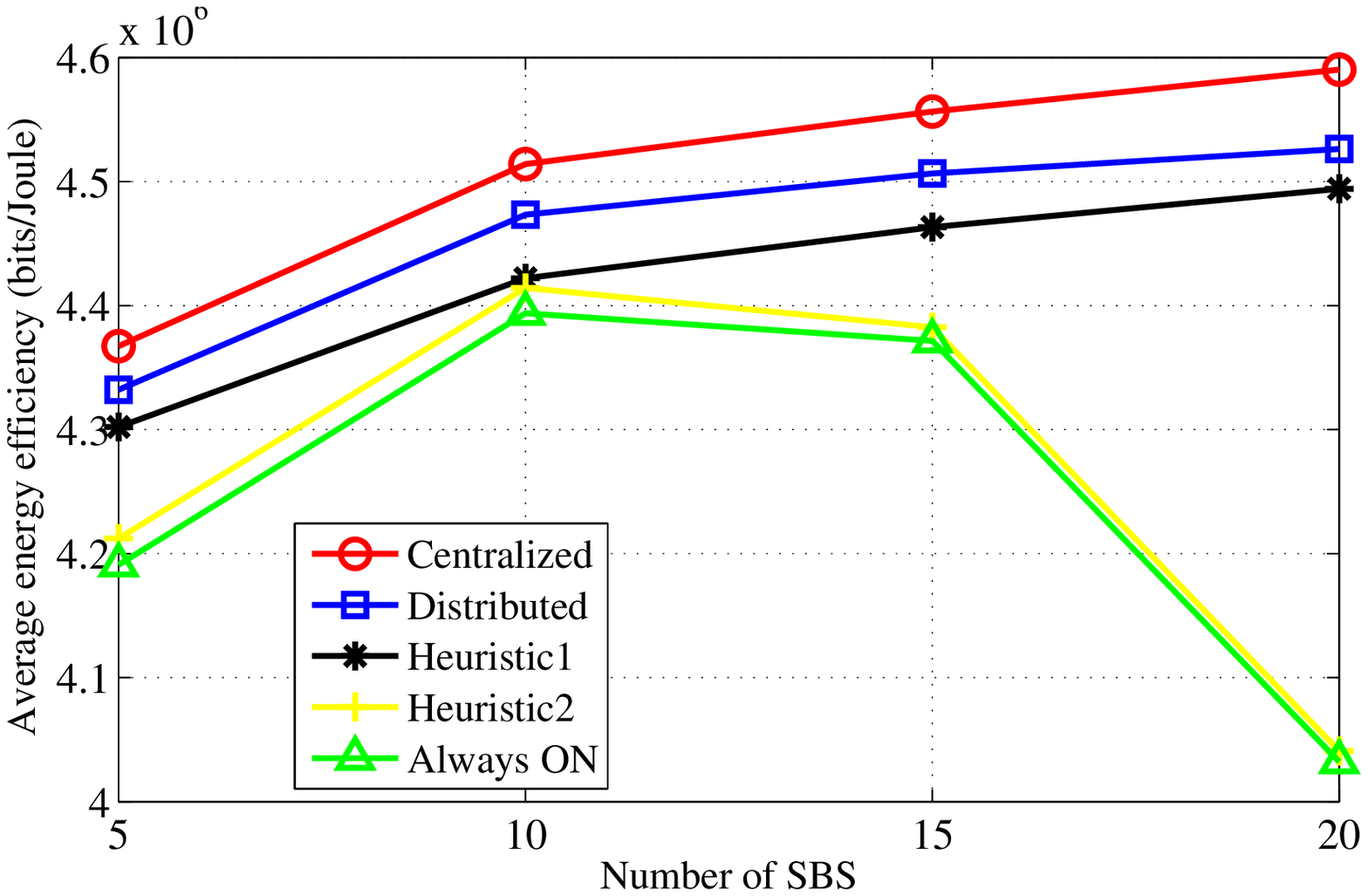}
		\caption{\footnotesize Average system EE versus number of SBS's for different BS
		ON-OFF switching strategies: $100$ users, non-uniformly distributed. }
		\vspace{-0.1in}
		\label{fig2}
	\end{minipage}		
	\begin{minipage}[t]{.32\linewidth}
		\centering	
		\includegraphics[width=2.25in, height=1.53in]{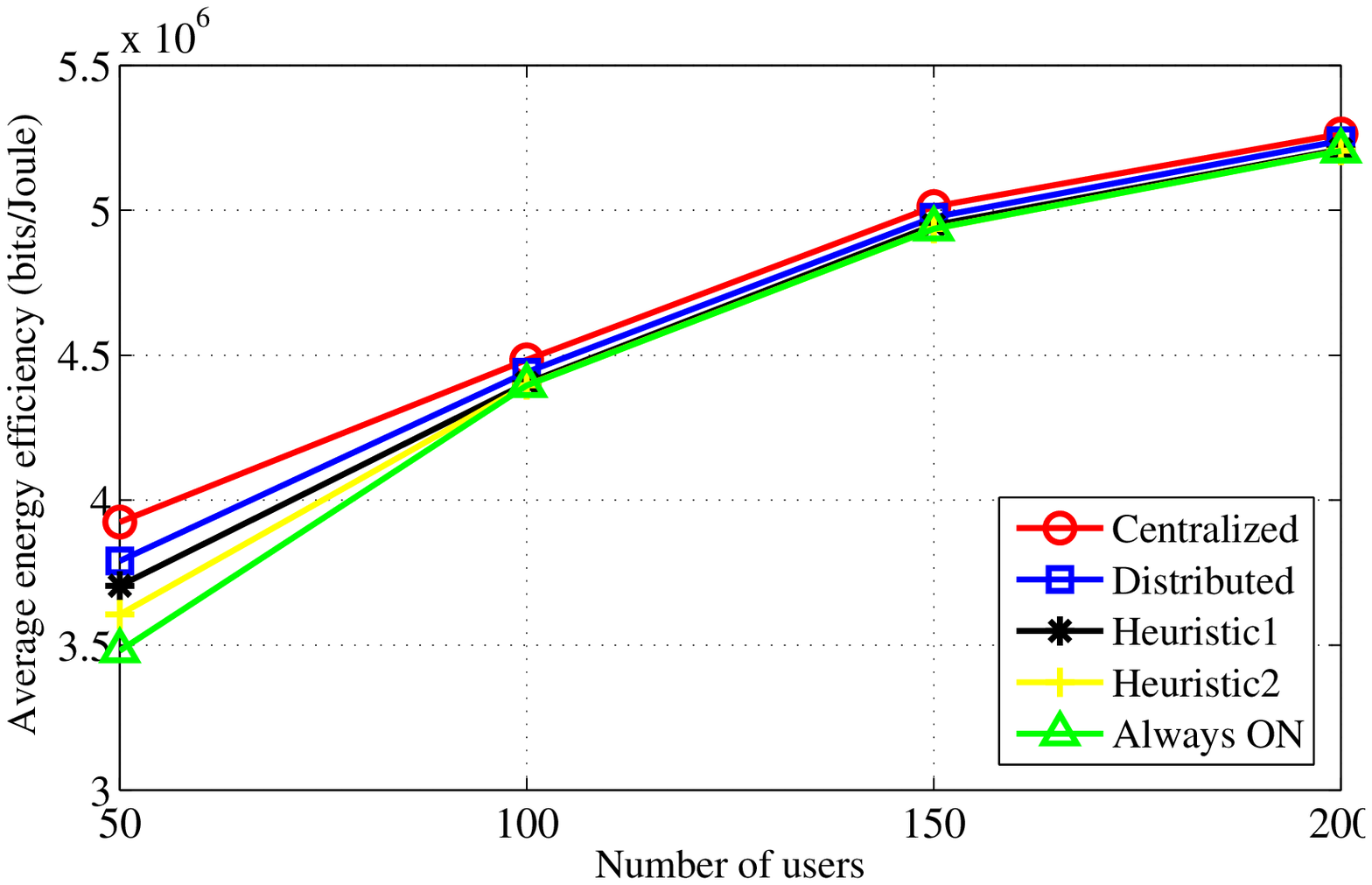}
		\caption{\footnotesize Average EE efficiency versus number of users for different
		BS ON-OFF switching strategies: uniformly distributed users, $10$ SBS's.}
		\vspace{-0.1in}
		\label{fig3}
	\end{minipage}	
\end{figure*}

\begin{figure*}[!t]
	\begin{minipage}[t]{.32\linewidth}
		\centering
		\includegraphics[width=2.25in, height=1.53in]{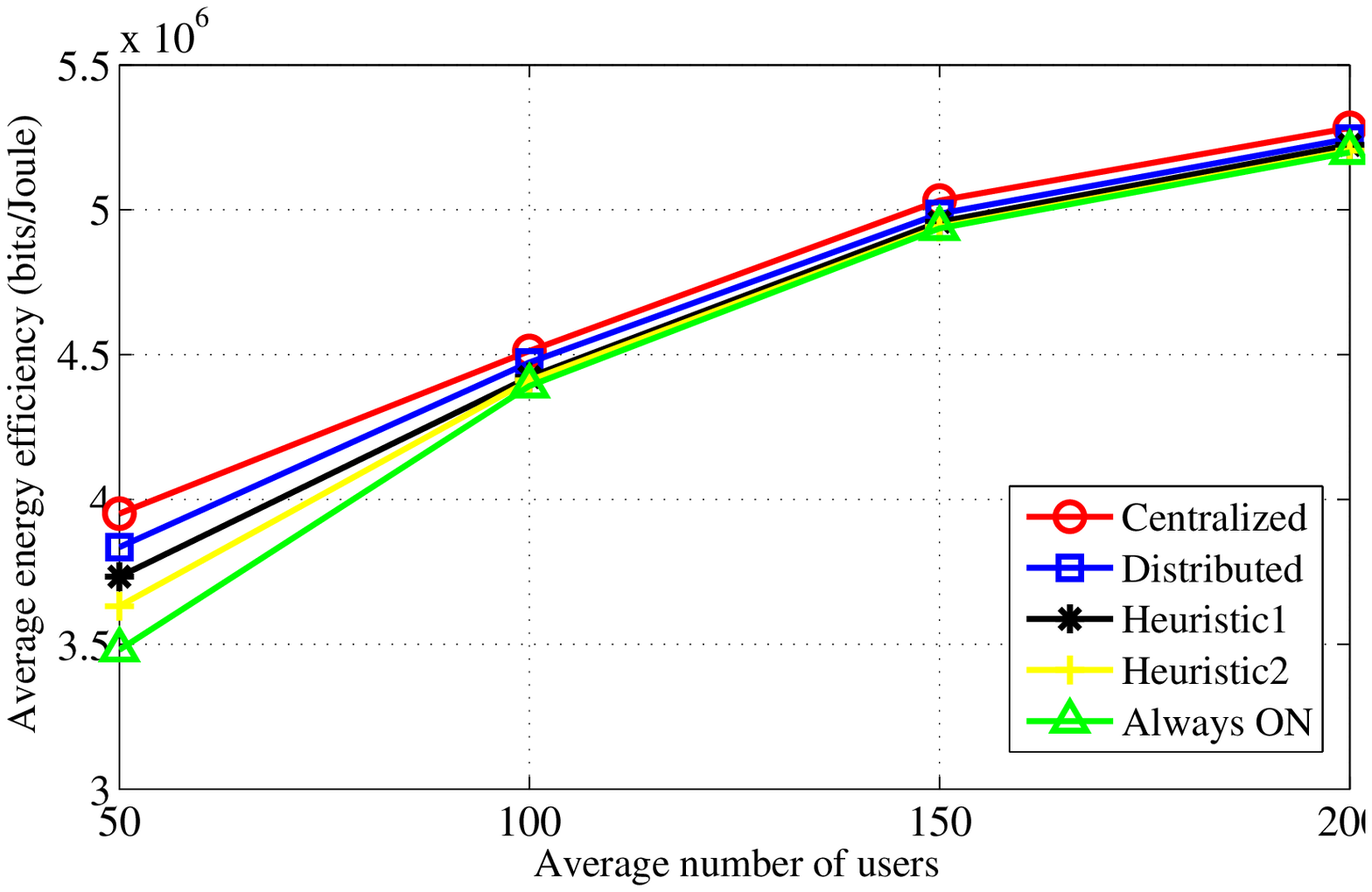}
		\caption{\footnotesize Average system EE versus average number of users for
		different BS ON-OFF switching strategies: non-uniformly distributed users, $10$ SBS's.}
		\label{fig4}
		\vspace{-0.1in}
	\end{minipage}
	\hspace{0.025in}
	\begin{minipage}[t]{.32\linewidth}
		\centering	
		\includegraphics[width=2.25in, height=1.53in]{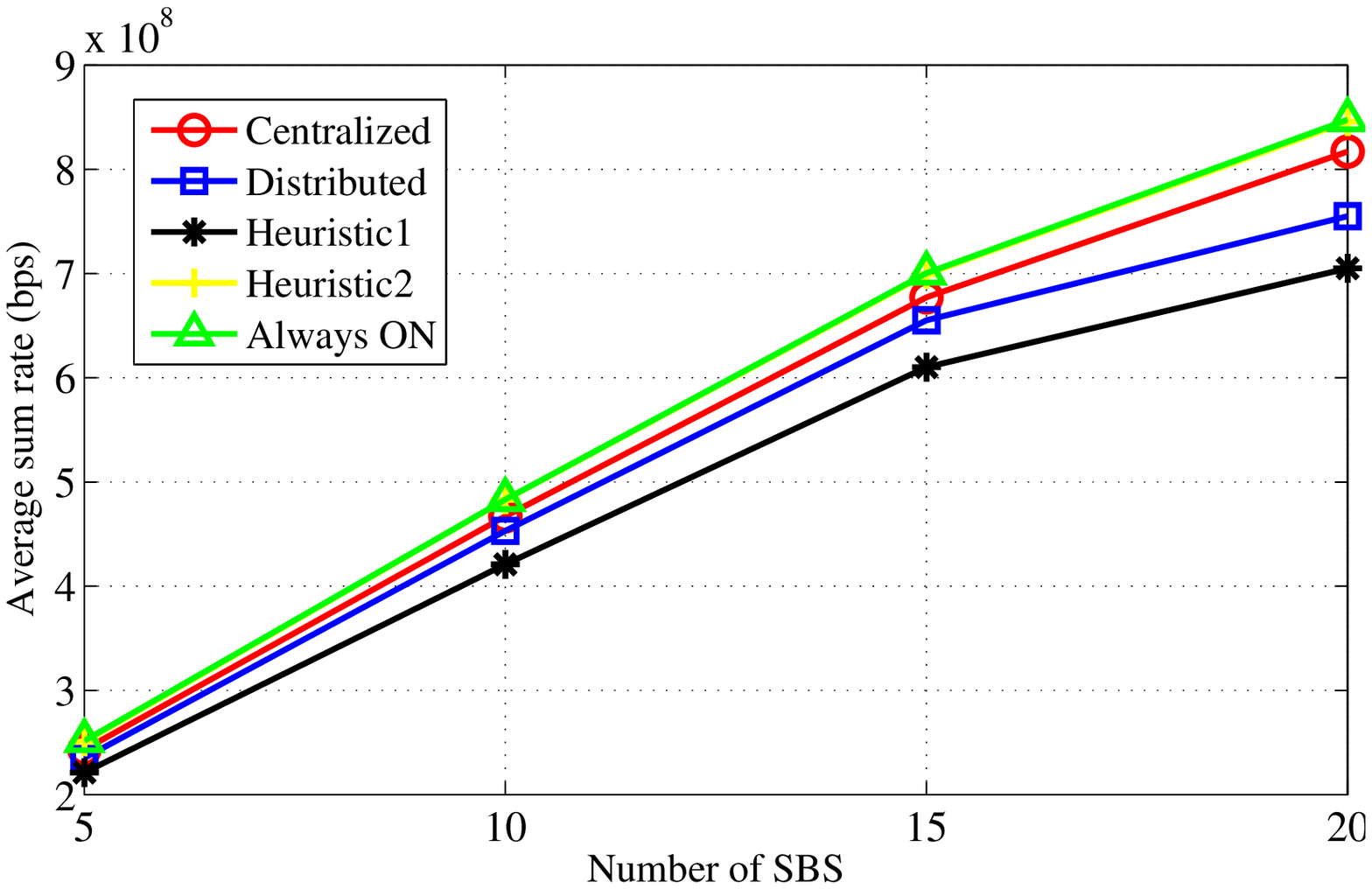}
		\caption{\footnotesize Average sum rate versus number of SBS's for different BS OF-OFF switching
		strategies: $100$ users, uniformly distributed. }
		\vspace{-0.1in}
		\label{fig5}
	\end{minipage}		
	\begin{minipage}[t]{.32\linewidth}
		\centering	
		\includegraphics[width=2.25in, height=1.53in]{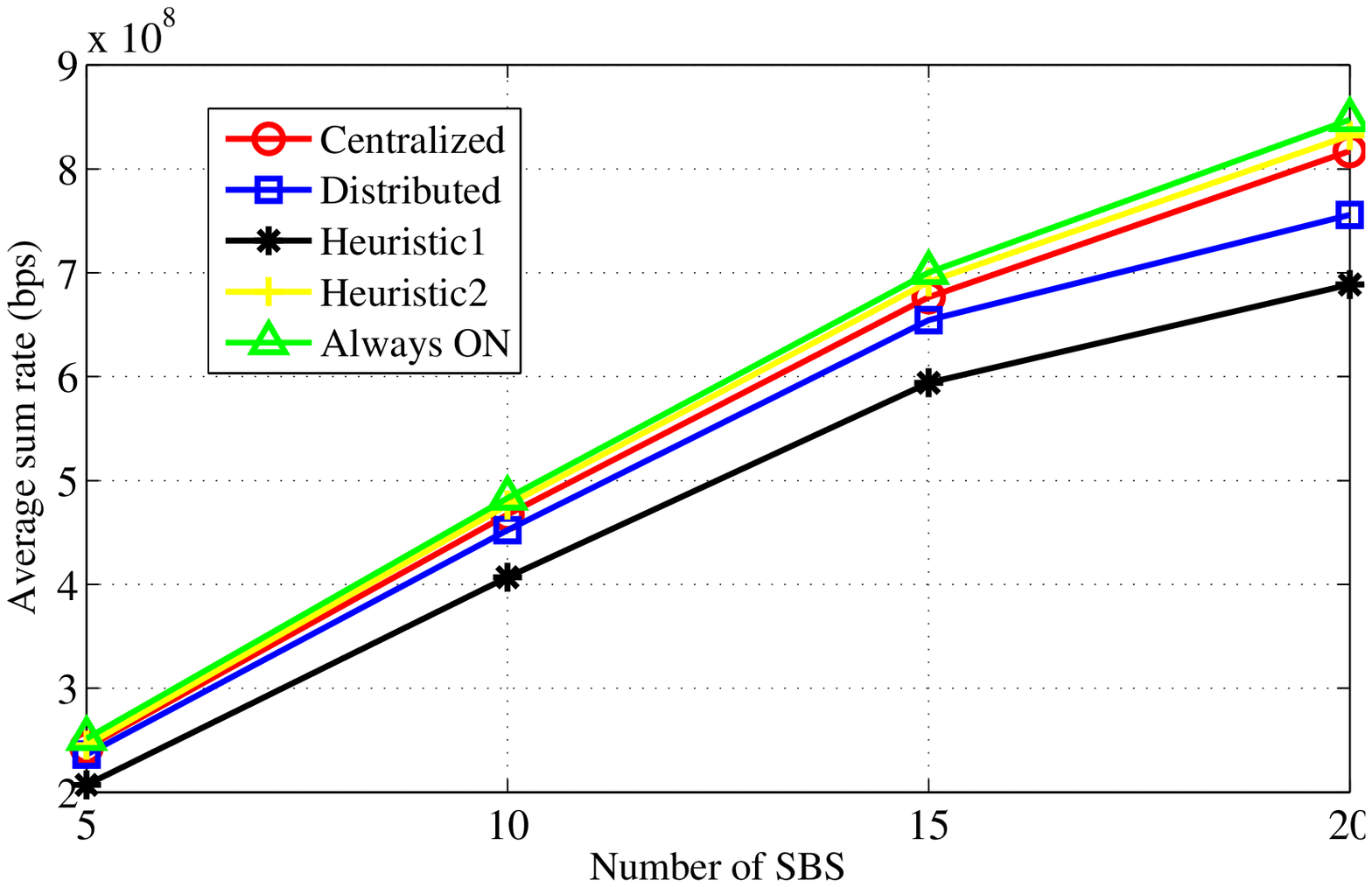}
		\caption{\footnotesize Average sum rate versus number of SBS's for different BS ON-OFF switching
		strategies: $100$ users, non-uniformly distributed.}
		\vspace{-0.1in}
		\label{fig6}
	\end{minipage}	
\end{figure*}

\begin{figure*}[!t]
	\begin{minipage}[t]{.32\linewidth}
		\centering
		\includegraphics[width=2.25in, height=1.53in]{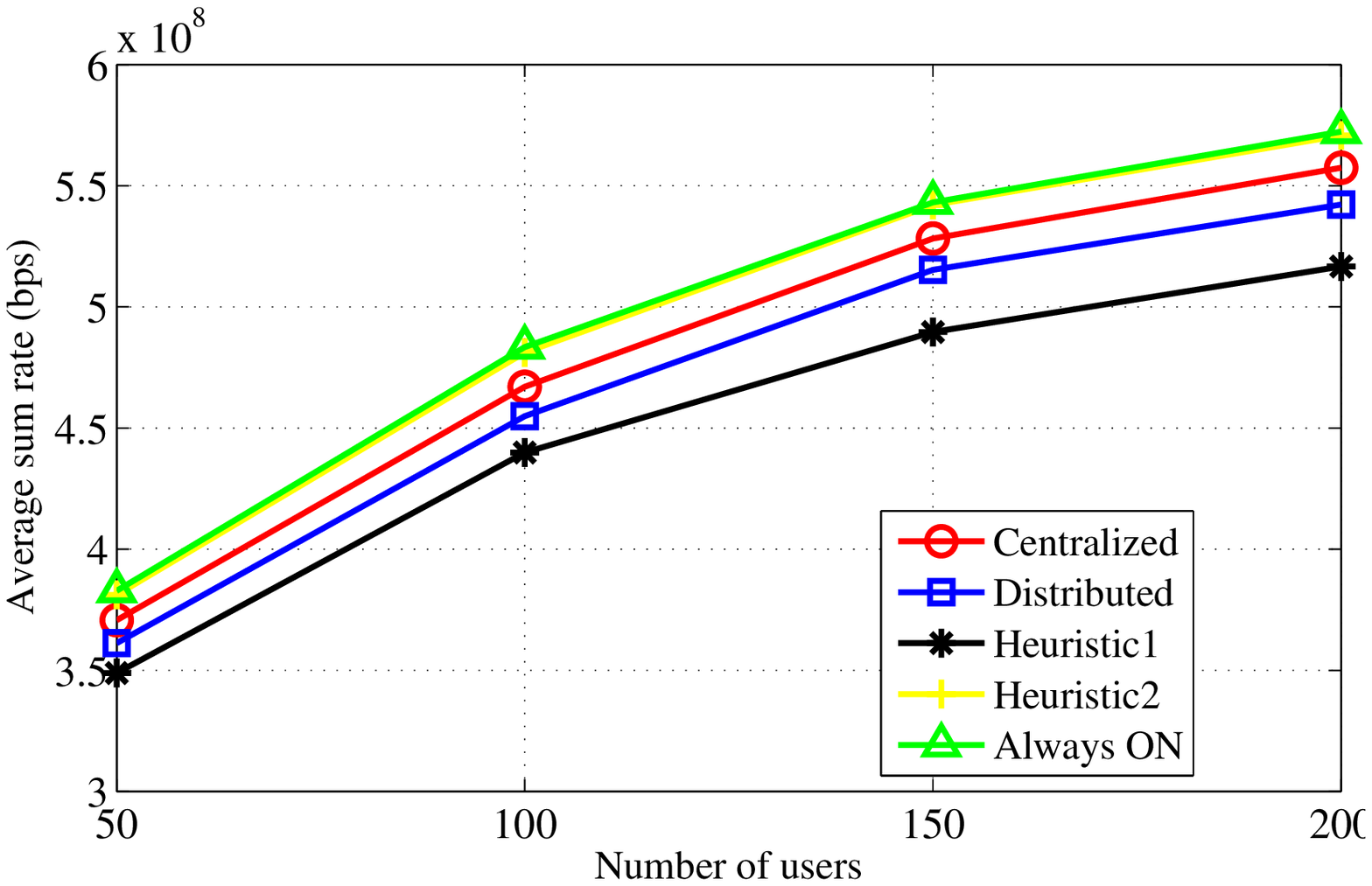}
		\caption{\footnotesize Average sum rate versus number of users for different BS ON-OFF switching
		strategies: uniformly distributed users, $10$ SBS's.}
		\label{fig7}
		\vspace{-0.1in}
	\end{minipage}
	\hspace{0.025in}
	\begin{minipage}[t]{.32\linewidth}
		\centering	
		\includegraphics[width=2.25in, height=1.53in]{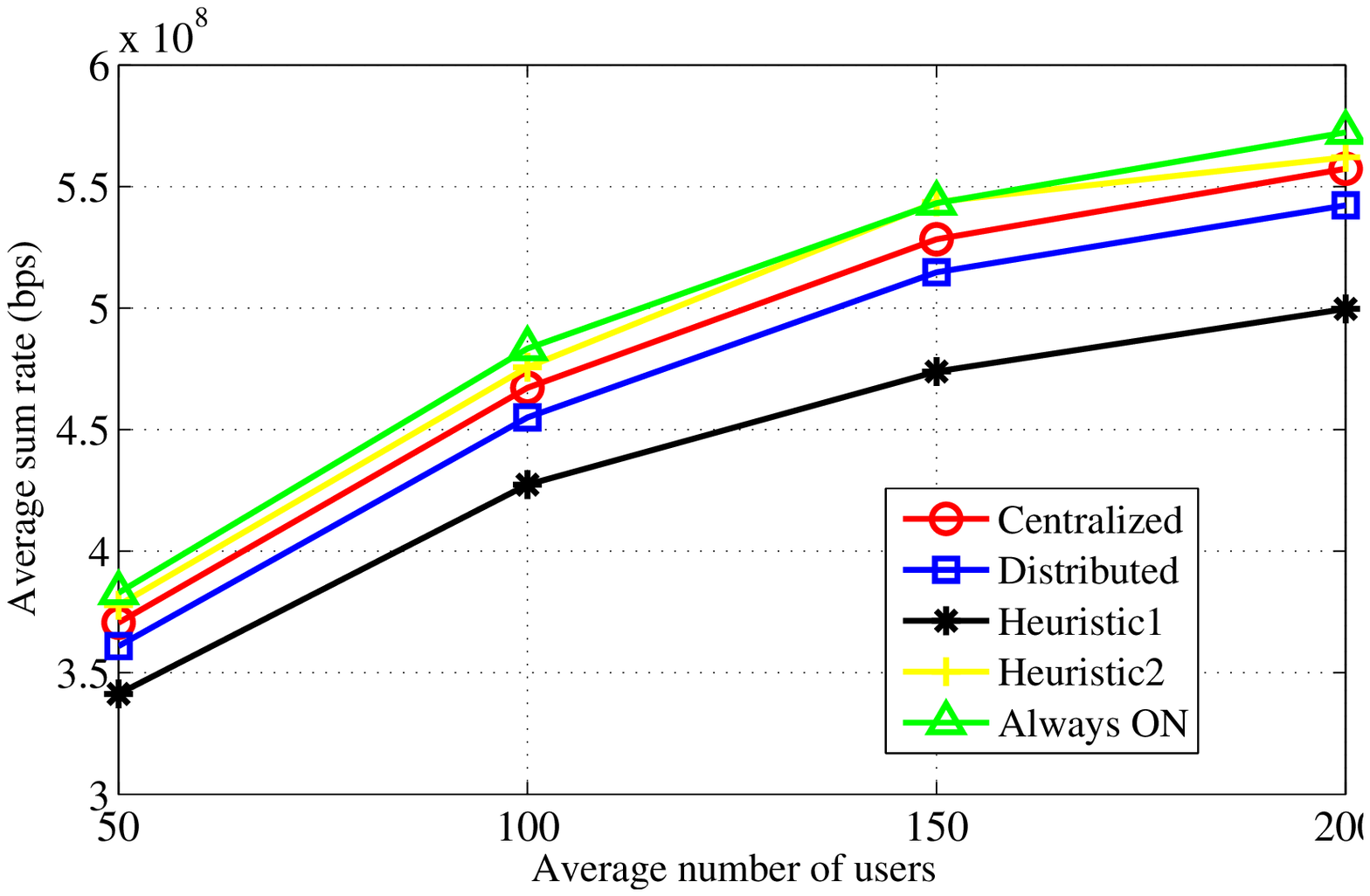}
		\caption{\footnotesize Average sum rate versus number of users for different BS ON-OFF switching
		strategies: non-uniformly distributed users, $10$ SBS's.}
		\vspace{-0.1in}
		\label{fig8}
	\end{minipage}		
	\begin{minipage}[t]{.32\linewidth}
		\centering	
		\includegraphics[width=2.25in, height=1.5in]{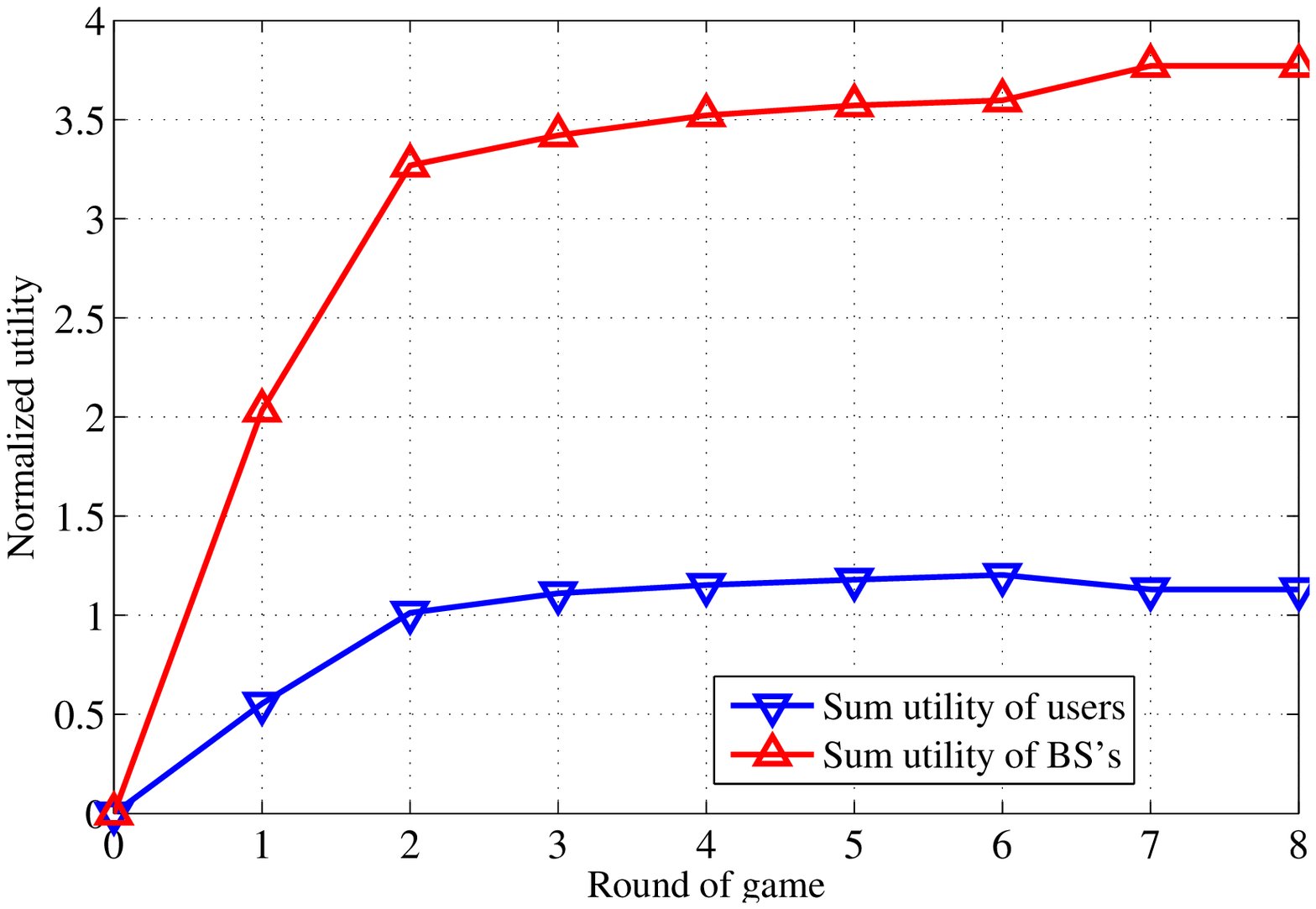}
		\caption{\footnotesize Convergence of the repeated bidding game: $100$ users and $10$ SBS's.}
		\vspace{-0.1in}
		\label{fig9}
	\end{minipage}	
\end{figure*}

\section{Related Work \label{sec:related}}

As key technologies for 5G network, massive MIMO and small cells have been extensively studied in prior works. The fundamental PHY layer techniques of massive MIMO were introduced in~\cite{Marzetta10,Rusek13}. Beyond the PHY, upper layer technologies in a wireless network with massive MIMO are also considered,
such as~\cite{Feng16,Xu15,Fernandes13}. In~\cite{Xu15}, user association and resource allocation in a massive MIMO HetNet were investigated with the objectives of rate maximization and rate maximization with proportional fairness.
In~\cite{Fernandes13}, a time-shift frame structure was proposed to mitigate inter-cell interference caused by pilot contamination in a multi-cell massive MIMO system. Since neighboring cells transmit pilots at different time instants, the inter-cell interference can be well
mitigated.

User association in HetNet has been widely investigated, such as~\cite{Ye13,Zhou15,Feng12}. In~\cite{Ye13}, user association and resource allocation were jointly considered to maximize the sum utility of users. Using dual decomposition, the proposed scheme can be implemented with a distributed algorithm, and the solution is shown to be near-optimal. In~\cite{Zhou15}, user association was considered to minimize the maximum load among all BS's, several approximation algorithms were proposed with analysis on complexity and performance bound. In~\cite{Feng12}, user association is determined by the achievable rate of each user. The HetNet with dense small cell deployment has drawn increasing interests since new challenges arise when a large number of SBS's are deployed in a given area. An overview of hyper-dense HetNet was presented in~\cite{Hwang13}, several cooperative approaches were proposed in~\cite{Xu14} and~\cite{Feng14} to enhance the network performance.

The EE has become an important objective for wireless networks in recent years. Specifically, the designs of energy-efficient massive MIMO systems were studied in~\cite{Bjornson13, Liu15, Ng12, Li14, Nguyen15}, where power and subcarrier allocation, antenna selection, and pilot allocation were considered to maximize EE. Compared to these works, we focus on the optimal BS ON-OFF switching strategy for energy efficient massive MIMO enabled HetNets. Some prior works also aimed to improve the EE of HetNets~\cite{Soh13, Ashraf10}. In~\cite{Soh13}, the authors considered two sleeping strategies for MBS, and derived the success probability and EE for a $K$-tier heterogeneous network using stochastic geometry analysis. In~\cite{Ashraf10}, a BS sleeping strategy was proposed to improve the EE of femtocell network. The BS ON-OFF switching strategie was also investigated by prior works including~\cite{Oh13}. In~\cite{Oh13}, a distributed algorithm that is easy to implement was proposed. The key principle is the use of a new notion called {\em network-impact}, which accounts for the load increments brought to other BS's by turning a BS OFF.

\section{Conclusions \label{sec:con}}

In this paper, we considered optimal BS ON-OFF switching and user association to maximize the EE
of a
massive MIMO HetNet. We formulated an
integer programming problem and proposed a centralized scheme to solve it with proven optimality. We also proposed a distributed scheme based on the user bidding approach and showed that an NE
can be achieved for each user and BS. The proposed schemes were evaluated with simulations and the results demonstrated their superior performance over several benchmark schemes.

-------------------------------------------------------------
\section*{Acknowledgment}
-------------------------------------------------------------
%
%



\begin{thebibliography}{99}


\bibitem{Qualcomm}
Qualcomm, ``The 1000x data challenge,'' [online] Available: https://www.qualcomm.com/1000x.

\bibitem{Andrews14}
J. Andrews, S. Buzzi, W. Choi, S. V. Hanly, A. Lozano, A. C. K. Soong, and J. C. Zhang, ``What will 5G be?,'' {\it IEEE J. Sel. Areas Commun.,\/} vol. 32, no. 6, pp. 1065--1082, June 2014.

\bibitem{Marzetta10}
T. L. Marzetta, ``Noncooperative cellular wireless with unlimited numbers of base station antennas,'' {\it IEEE Trans. Wireless Commun.,\/} vol.9, no.11, pp.3590--3600, Nov. 2010.

\bibitem{Ngo13}
H. Q. Ngo, E. G. Larsson, and T. L. Marzetta, ``Energy and spectral efficiency of very large multiuser MIMO systems,'' {\it IEEE Trans. Commun.,\/} vol.61, no.4, pp.1436--1449, Apr. 2013.

\bibitem{Xu14Access}
Y. Xu, G. Yue, and S. Mao, ``User grouping for massive MIMO in FDD systems: New design methods and analysis,'' {\it IEEE Access J.}, vol.2, no.1, pp.947--959, Sept. 2014.

\bibitem{Hosseini13}
K. Hosseini, J. Hoydis, S. ten Brink, and M. Debbah, ``Massive MIMO and small cells: How to densify heterogeneous networks,'' in {\it Proc. ICC'13,\/} Budapest, Hungary, June 2013, pp.5442--5447.

\bibitem{Hoydis13}
J. Hoydis, K. Hosseini, S. ten Brink, and M. Debbah, ``Making smart use of excess antennas: Massive MIMO, small cells, and TDD,'' {\it Bell Labs Tech. J.,\/} vol. 18, no. 2, pp. 5--21, Sept. 2013.

\bibitem{Bjornson13}
E. Bj$\ddot{o}$rnson, M. Kountouris, and M. Debbah, ``Massive MIMO and small cells: Improving energy efficiency by optimal soft-cell coordination,'' in {\it Proc. ICT'13,\/} Casablanca, Morocco, May 2013, pp.1--5.

\bibitem{Bethanabhotla14}
D. Bethanabhotla, O. Y. Bursalioglu, H. C. Papadopoulos, and G. Caire, ``User association and load balancing for cellular massive MIMO,'' in {\it Proc. IEEE Inf. Theory Appl. Workshop 2014,\/} San Diego, CA, Feb. 2014, pp. 1--10.

\bibitem{Xu15}
Y. Xu and S. Mao, ``User Association in Massive MIMO HetNets,'' [online] Available: http://arxiv.org/abs/1501.03407.

\bibitem{Liu15}
D. Liu, L. Wang, Y. Chen, T. Zhang, K. K. Chai, M. Elkashlan, ``Distributed energy efficient fair user association in massive MIMO enabled HetNets,'' {\it IEEE Comm. Lett.,\/} DOI: 10.1109/LCOMM.2015.2454504.

\bibitem{Hasan11}
Z. Hasan, H. Boostanimehr, and V. K. Bhargava, ``Green cellular networks: A survey, some research issues and challenges,'' {\it IEEE Commun. Surveys $\&$  Tutorials,\/} vol. 13, no. 4, pp. 524--540, Fourth Quarter 2011.

\bibitem{Chen11}
Y. Chen, S. Zhang, S. Xu, and G. Y. Li, ``Fundamental tradeoffs on green wireless networks,'' {\it IEEE Commun. Mag., \/} vol. 49, no. 6, pp. 30--37, June
2011.

\bibitem{Wu15}
G. Wu, C. Yang, S. Li, and G. Y. Li, ``Recent Advances in energy-efficient networks and their application in 5G systems,'' {\it IEEE Wireless Commun. Mag.,\/} vol. 22, no. 2, pp. 145--151, Apr. 2015.

\bibitem{Oh11}
E. Oh, B. Krishnamachari, X. Liu, and Z Niu, ``Toward dynamic energy-efficient operation of cellular network infrastructure,'' {\it IEEE Commun. Mag., \/} vol. 49, no. 6, pp. 56--61, June 2011.

\bibitem{Arnold10}
O. Arnold, F. Richter, G. Fettweis, and O. Blume, ``Power consumption modeling of different base station types in heterogeneous cellular networks,'' in {\it Proc. Future Netw. Mobile Summit,\/} pp. 1--8, June 2010.

\bibitem{Boyd04}
S. Boyd and L. Vandenberghe, {\it Convex Optimization,\/}, Cambridge University Press, 2004.

\bibitem{Schrijver98}
A. Schrijver, {\it Theory of Linear and Integer Programming,\/} John Wiley $\&$ Sons, June 1998.

\bibitem{Berenstein97}
C. Berenstein and R. Gay, {\it Complex Variables: An Introduction,\/} Springer, 1997.

\bibitem{Rusek13}
F. Rusek, D. Persson, B. Lau, E. Larsson, T. Marzetta, O. Edfors, and F. Tufvesson, ``Scaling Up MIMO: Opportunities and Challenges with Very Large Arrays,'' {\it IEEE Sig. Proc. Mag.,\/} vol. 30, Jan. 2013, pp. 40--60.

\bibitem{Feng16}
M. Feng and S. Mao, ``Harvest the potential of massive MIMO with multi-layer techniques,'' {\it IEEE Network, \/} accepted.

\bibitem{Fernandes13}
F. Fernandes, A. Ashikhmin, and T. L. Marzetta, ``Inter-cell interference in noncooperative TDD large scale antenna systems,'' {\it IEEE J. Select. Areas Commun.,\/} vol.31, no.2, pp. 192--201, Feb. 2013.

\bibitem{Ye13}
Q. Ye, B. Rong, Y. Chen, M. A.-Shalash, C. Caramanis, and J. G. Andrews, ``User association for load balancing in heterogeneous cellular networks,'' {\it IEEE Trans. Wireless Commun.,\/} vol. 12, no. 6, pp. 2706--2716, June 2013.

\bibitem{Zhou15}
H. Zhou, S. Mao, and P. Agrawal, ``Approximation algorithms for cell association and scheduling in femtocell networks,'' {\it IEEE Transactions on Emerging Topics in Computing,\/} DOI: 10.1109/TETC.2015.2395093.

\bibitem{Feng12}
M. Feng, D. Chen, Z. Wang and T. Jiang, ``Throughput improvement for OFDMA femtocell networks through spectrum allocation and access control strategy,'' in {\it Proc. ComComAP'12,\/} Hong Kong, China, Jan. 2012, pp.387--391.

\bibitem{Hwang13}
I. Hwang, B. Song and S. S. Soliman, ``A holistic view on hyper-dense heterogeneous and small cell networks,'' {\it IEEE Commun. Mag., \/} vol. 51, no. 6, pp. 20--27, June 2013.

\bibitem{Xu14}
J. Xu, et al., ``Cooperative distributed optimization for the hyper-dense small cell deployment,'' {\it IEEE Commun. Mag., \/} vol. 52, no. 5, pp. 61--67, May 2014.

\bibitem{Feng14}
M. Feng, T. Jiang, D. Chen, and S. Mao, ``Cooperative small cell networks: High capacity for hotspots with interference mitigation,'' {\it IEEE Wireless Commun. Mag.,\/} vol. 21, no. 6, pp. 108--116, Dec. 2014.

\bibitem{Ng12}
D. Ng, E. Lo, and R. Schober, ``Energy-efficient resource allocation in OFDMA systems with large numbers of base station antennas,'' {\it IEEE Trans. Wireless Commun.,\/} vol.11, no.9, pp.3292--3304, Sept. 2012.

\bibitem{Li14}
P.-R. Li, T.-S. Chang, and K.-T. Feng, ``Energy-efficient power allocation for distributed large-scale MIMO cloud radio access networks,'' in {\it Proc. WCNC'14,\/} Istanbul, Turkey, Apr. 2014, pp.1856--1861.

\bibitem{Nguyen15}
T. M. Nguyen and L. B. Le, ``Joint pilot assignment and resource allocation in multicell massive MIMO network: Throughput and energy efficiency maximization,'' in {\it Proc. WCNC'15,\/} New Orleans, LA, Mar. 2015, pp.393--398.

\bibitem{Soh13}
Y. S. Soh, T. Q. S. Quek, M. Kountouris, and H. Shin, ``Energy efficient heterogeneous cellular networks,'' {\it IEEE J. Select. Areas Commun.,\/} vol.31, no.5, pp. 840--850, May 2013.

\bibitem{Ashraf10}
I. Ashraf, L. T.W. Ho, and H. Claussen, ``Improving energy efficiency of femtocell base stations via user activity detection,'' in {\it Proc. WCNC'10,\/} Sydney, Austrilia Apr. 2010, pp.1--5.

\bibitem{Oh13}
E. Oh, K. Son, and B. Krishnamachari, ``Dynamic base station switching-On/Off strategies for green cellular networks,'' {\it IEEE Trans. Wireless Commun.,\/} vol.12, no.5, pp.2126--2136, May 2013.




\end{thebibliography}
\end{document}